\patchcmd\blx@bblinput{\blx@blxinit}
                      {\blx@blxinit
                       %new jobname
                      }{}{\fail}
\newtheorem{theorem}{Theorem}
\newtheorem{proposition}[theorem]{Proposition}
\newtheorem{corollary}[theorem]{Corollary}
\newtheorem*{claim}{Claim}
\newtheorem{lemma}[theorem]{Lemma}
\theoremstyle{remark}
\newtheorem*{remark}{Remark}
\theoremstyle{definition}
\newtheorem{definition}[theorem]{Definition}
\crefname{lemma}{lemma}{lemmata}
\crefname{property}{property}{properties}
\crefname{condition}{condition}{conditions}
\crefname{equation}{Eq.}{Eqs.}
\crefname{enumi}{property}{properties}
\Crefname{enumi}{Property}{Properties}
\Crefname{proposition}{Proposition}{Propositions}
\newcommand{\lr}[1]{\left(#1\right)}
\newcommand{\lre}[1]{\left[#1\right]}
\newcommand{\lrb}[1]{\left\lbrace#1\right\rbrace}
\newcommand{\Oless}[1]{\mathcal{O}\lr{#1}}
\newcommand{\Oequal}[1]{\Theta\lr{#1}}
\newcommand{\runtime}[1]{\Oless{#1}}
\newcommand{\lw}{l_w}
\newcommand{\lo}{l_o}
\newcommand{\uw}{u_w}
\newcommand{\uo}{u_o}
\newcommand{\Uw}{U_w}
\newcommand{\Uo}{U_o}
\newcommand{\Startw}[1]{S_w^{#1}}
\newcommand{\Termw}[1]{T_w^{#1}}
\newcommand{\Starto}[1]{S_o^{#1}}
\newcommand{\Termo}[1]{T_o^{#1}}
\newcommand{\request}[1]{r^{#1}}
\newcommand{\requestl}[1]{r_l^{#1}}
\newcommand{\requestu}[1]{r_u^{#1}}
\newcommand{\Request}[1]{R^{#1}}
\newcommand{\work}[1]{w^{#1}}
\newcommand{\Work}[1]{W^{#1}}
\newcommand{\DODOSP}{\textsc{Days On Days Off Scheduling Problem}\xspace}
\newcommand{\LDODOSP}{\textsc{Local Days On Days Off Scheduling Problem}\xspace}
\newcommand{\UDODOSP}{\textsc{Only Upper Bounded Days On Days Off Scheduling Problem}\xspace}
\newcommand{\dodosp}{\textsc{DODOSP}\xspace}
\newcommand{\ldodosp}{\textsc{LDODOSP}\xspace}
\newcommand{\cdodosp}{\textsc{Cyclic DODOSP}\xspace}
\newcommand{\udodosp}{\textsc{UDODOSP}\xspace}
\newcommand{\threepartition}{\textsc{$3$-Partition Problem}\xspace}
\newcommand{\rthreepartition}{\textsc{Restricted $3$-Partition Problem}\xspace}
\newcommand{\CP}{\textsc{Circulant Problem}\xspace}
\newcommand{\WORK}{\text{ON}}
\newcommand{\OFF}{\text{OFF}}
\newcommand{\TASK}{\text{SHIFT}}
\newcommand{\N}{\mathbb{N}}
\newcommand{\IN}{\mathbb{N}}
\newcommand{\CO}{\mathcal{O}}
\newcommand{\CD}{\mathcal{D}}
\newcommand{\CN}{\mathcal{N}}
\newcommand{\problemtitle}[1]{\gdef\@problemtitle{#1}}%
\newcommand{\probleminput}[1]{\gdef\@probleminput{#1}}%
\newcommand{\problemquestion}[1]{\gdef\@problemquestion{#1}}%
\newcommand{\problemsep}[1]{\gdef\@problemsep{#1}}
  \par\addvspace{.5\baselineskip}
  \noindent \fbox{
  \begin{tabularx}{0.85\textwidth}{@{\hspace{\parindent}} l X c}
    \multicolumn{2}{@{\hspace{\parindent}}l}{\@problemtitle} \\[7pt]%
    \textbf{Input:} & \@probleminput%
    
    \textbf{Question:} & \@problemquestion%
  \end{tabularx}}
  \par\addvspace{.5\baselineskip}
\newcommand{\customlabel}[2]{%
\protected@write\@auxout{}{\string\newlabel{#1}{{#2}{}}}}
\newenvironment{claimproof}{%
  \proof[Proof of Claim]}{\endproof}
\def\namedlabel#1#2#3{\begingroup
    #2%
    \def\@currentlabel{#2}%
    \phantomsection\label[#3]{#1}\endgroup
}
\providecommand{\keywords}[1]
{
  {\small	
  \textbf{\textit{Keywords---}} #1}
}
\newcommand{\offsquare}[2]{
    \draw[] (#1) rectangle (#2);
}
\newcommand{\worksquare}[2]{
    \draw[] (#1) rectangle (#2);
    \fill[gray] (#1) rectangle (#2);
}
\definecolor{col_highlighted}{HTML}{31782F}
\newcommand{\nurseplan}[4]{
\begin{scope}[shift = {(0,-#1)}]
    \foreach \x in {#2}{
        \worksquare{\x,0}{\x+1,1}
    }
    \foreach \x in {#3}{
        \offsquare{\x,0}{\x+1,1}
    }
    \node at (-0.7,0.5) {#4};
\end{scope}
}
\newcounter{numdays}
\newcommand{\demandvalues}[1]{
\setcounter{numdays}{0}
\begin{scope}[shift = {(0,0)}]
    \foreach \x in {#1}{
        \node at (\value{numdays}+0.5,0.5) {$\x$};
        \addtocounter{numdays}{1}
    }
    \draw (0,0) -- (\value{numdays},0);
        \foreach \x in {0,...,\value{numdays}}{
            \draw (\x,-0.15) -- (\x,0.15);
        }
        \node at (-0.7,0.5) {$r^d$};
\end{scope}
}
\newcommand{\periodcounter}[3]{
\setcounter{numdays}{0}
\begin{scope}[shift = {(0,-#2)}]
    \foreach \x in {#1}{
        \node at (\value{numdays}+0.5,0.3) {$\x$};
        \addtocounter{numdays}{1}
    }
    \draw (0,0) -- (\value{numdays},0);
        \foreach \x in {0,...,\value{numdays}}{
            \draw (\x,-0.1) -- (\x,0.1);
        }
        \node at (-0.7,0.3) {#3};
\end{scope}
}
\newcommand{\periodstart}[1]{
    \coordinate (a) at (#1);
   \draw[fill=green!60!black,draw=green!50!black] (a) -- ($(a)+(0.5,-0.5)$) -- ($(a)+(0,-1)$) -- cycle;
}
\newcounter{nurseID}
\newcommand{\orderedperiodstarts}[2]{
\setcounter{nurseID}{0}
\begin{scope}[shift = {(0,0)}]
    \foreach \x in {#2}{
        \periodstart{\x,-\value{nurseID}}
        \addtocounter{nurseID}{1}
        \setcounter{nurseID}{\intcalcMod{\value{nurseID}}{#1}}
    }
\end{scope}
}
\newcommand{\periodend}[1]{
    \coordinate (a) at (#1);
   \draw[fill=red,draw=red] ($(a)+(1,0)$) -- ($(a)+(0.5,-0.5)$) -- ($(a)+(1,-1)$) -- cycle;
}
\newcommand{\orderedperiodends}[2]{
\setcounter{nurseID}{0}
\begin{scope}[shift = {(0,0)}]
    \foreach \x in {#2}{
        \periodend{\x,-\value{nurseID}}
        \addtocounter{nurseID}{1}
        \setcounter{nurseID}{\intcalcMod{\value{nurseID}}{#1}}
    }
\end{scope}
}
\def\vdis{0.48}
  \newcounter{mycounter}
\newcommand{\add}[2]{
    \foreach \x in {1,...,#2}{
        \node at (\vdis*\themycounter,0) {$#1$};
        \stepcounter{mycounter}
    }
}
    \renewcommand{\add}[1]{
        \node at (\vdis*\themycounter,0) {$#1$ };    
        \stepcounter{mycounter}
    }
    \newcommand{\addtriple}[1]{
        \add{#1}
        \add{\dots}
        \add{#1}
    }
    \newcommand{\addbox}[1]{
        \draw (\vdis*\themycounter-\vdis*0.45,-0.3) rectangle (\vdis*\themycounter+\vdis*1.45,0.3);
        \node[] at (\vdis*\themycounter+\vdis*0.5,0) {$l_w\times#1$ };
        \stepcounter{mycounter}
        \stepcounter{mycounter}
    }
\tikzstyle{circlenode}   = [circle,minimum size=4mm,align=center,text width=5mm,
\tikzstyle{varnode}  = [circle,minimum size=4mm,align=center,text width=4mm,
\tikzstyle{edgenode}   = [fill=white,circle,inner sep=0pt,minimum size=4mm]
\tikzstyle{normaledge} = [->,>={Stealth}, very thick, black]
\newcommand{\mailto}[1]{\href{mailto:#1}{#1}}
\title{The \DODOSP}
\author{Fabien Nießen\thanks{Department of Computer Science, KU Leuven, Gebroeders De Smetstraat 1, 9000 Gent, Belgium ({\mailto{fabien.niessen@kuleuven.be}}), supported by KU Leuven C24E/23/012 'Human-centred decision support based on new theory for personnel rostering', corresponding author}
\and Paul Paschmanns\thanks{Research Institute for Discrete Mathematics, University of Bonn, Lenn\'estr.~2, 53113 Bonn, Germany ({\mailto{paschmanns@dm.uni-bonn.de}})
}
}
\date{}
\renewcommand{\@fnsymbol}[1]{\ifcase#1\or a\or b\or c\or d\or e\or f\or *\else \@ctrerr\fi}
\begin{document}

\maketitle
\begin{abstract}
Personnel scheduling problems have received considerable academic attention due to their relevance in various real-world applications. These problems involve preparing feasible schedules for an organization's employees and often account for factors such as qualifications of workers and holiday requests, resulting in complex constraints.
While certain versions of the personnel rostering problem are widely acknowledged as NP-hard, there is limited theoretical analysis specific to many of its variants. Many studies simply assert the NP-hardness of the general problem without investigating whether the specific cases they address inherit this computational complexity. 

In this paper, we examine a variant of the personnel scheduling problems, which involves scheduling a homogeneous workforce subject to constraints concerning both the total number and the number of consecutive work days and days off.
This problem was previously claimed to be NP-complete. In this paper, we prove its NP-completeness and investigate how the combination of constraints contributes to this complexity. Furthermore, we analyze various special cases that arise from the omission of certain parameters, classifying them as either NP-complete or polynomial-time solvable. For the latter, we provide easy-to-implement and efficient algorithms to not only determine feasibility, but also compute a corresponding schedule.
\end{abstract}

\keywords{complexity theory, OR in health services, timetabeling, personnel scheduling}

\clearpage

\section{Introduction}
In personnel scheduling problems the task is to compute a schedule for a given set of workers subject to certain constraints, such as daily requests on the number of active workers or an upper limit on the number of consecutive night shifts, possibly aiming to optimize a given objective function.
Such problems are well-studied from a practical point of view, due to its many applications. 

Given the numerous constraints involved, personnel rostering problems are frequently addressed using integer linear programming (ILP) or LP-based heuristics \parencite{burkeState2004}. \textcite{burkeState2004} also provide a comprehensive overview of nurse rostering -- one of the most widely studied applications within personnel scheduling. Efforts to categorize nurse rostering problems in a manner similar to scheduling are presented by \textcite{causmaeckerCategorisation2011}. For examples of modeling nurse rostering  see \parencite{glassNurse2010,smetNurse2013,smetModelling2014}. Regarding the compuational complexity, the NP-hardness of specific cases within personnel rostering has been established in several studies, including key contributions by \textcite{lauComplexity1996}, \textcite{osogamiClassification2000}, and \textcite{hartogComplexity2023}. In contrast, polynomially solvable cases -- where particular constraints or simplified structures allow for efficient solutions -- are for example examined by \textcite{bruckerPersonnel2011} and \textcite{smetPolynomially2016}. A comprehensive summary covering both NP-hard and polynomially solvable cases is available in the thesis by \textcite{hartogComplexity2016}.\medskip

In this paper we are focusing on the decision variant of the \DODOSP (\dodosp), a personnel scheduling problem involving a homogeneous set of workers, introduced by \textcite{brunnerBoundedFlexibility2013}. The \dodosp involves a sequence of days with minimum requests for the number of workers needed each day, along with the following parameters: upper limits on the total number of work days and days off for each worker, as well as upper and lower bounds on the number of consecutive work days and days off.

These constraints are derived from practical considerations. For example, an upper limit on consecutive working days could stem from ensuring compliance with labor regulations. 
For example, in Germany, exceeding 19 consecutive days of work violates the \enquote{Arbeitszeitgesetz} \parencite{arbeitszeitgesetz}.
Similarly, the total number of working days of each worker during the given time horizon might be limited.
On the other hand, employees usually only have a limited number of vacation days. A lower bound on the number of consecutive work days or days off can prevent wasting too much time to commute and might therefore be especially interesting for industries that involve visiting customers on site.

While \textcite{brunnerBoundedFlexibility2013} primarily focused on practical results using a branch-and-price approach, they claimed the NP-completeness of the \dodosp. However, the proof they provided was unfortunately incorrect.
In this paper, we prove that the problem is indeed NP-complete. The complexity primarily arises from combining lower bounds on consecutive work days or days off with an upper bound on each worker's total number of work days or days off. Additionally, we classify each special case that results from omitting certain constraints as either polynomial-time solvable or NP-complete. All hardness results are derived from reductions from the \threepartition. For the polynomial-time solvable cases, we reduce the \dodosp to the problem of checking for negative cycles in a directed graph.\medskip 

The remainder of this paper is organized as follows. We formally introduce the \dodosp and the special cases we study in \Cref{sec: problem intro}. Furthermore, we outline the main ideas used throughout the paper. The hardness results are given in \Cref{sec: np completeness}. Then, in \Cref{sec: local bounds,sec: upper bounds}, we provide polynomial-time algorithms for the special cases that are not NP-complete. In  \Cref{sec: optimization}, we briefly discuss how the results can be used in case of an optimization variant of the problem. Finally, \Cref{sec: summary} concludes the paper by summarizing our results and providing guidance on how they can be used, while also identifying open questions that remain for future research.
\section{Problem introduction}\label{sec: problem intro}
This chapter begins by first providing a formal definition of the \DODOSP (\dodosp) in \Cref{subsec: prob introd - formal definition} and introduces the terminology needed to understand the problem. Furthermore, we establish a set of special cases of the \dodosp, which will come in handy to analyze where the hardness originates from. Finally, in \Cref{subsec: prob introd - overview} we present the main concepts used throughout
the paper.

\subsection{Formal definition}\label{subsec: prob introd - formal definition}
In this paper we are always given a number of workers $N\in\IN$ and a number of days $D\in \IN$. We then denote the set of workers by $\CN \coloneqq \lrb{n_1,\dots,n_N}$ and the set of days by $\CD \coloneqq \lre{D} = \lrb{1,\dots,D}$.
\begin{definition}
    Given a number of workers $N$ and a number of days $D$, a mapping $f : \CN\times \CD \to \lrb{\WORK,\OFF}$ is called a \textit{schedule}. We say worker $n_i$ works or is on duty on day $d\in \CD$ if $f\lr{n_i,d}=\WORK$. Otherwise, we say that worker $n_i$ has day $d$ off. 
\end{definition}

\begin{definition}
    Given a worker $n_i$, we call every inclusion-wise maximal set of consecutive days on which $n_i$ works a \textit{work period} (of worker $n_i$). Similarly, we define an \textit{off period} (of worker $n_i$) as an inclusion-wise maximal set of days on which $n_i$ does not work.
\end{definition}
\noindent \Cref{fig: feasible example} provides an example schedule and illustrates the concept of work periods and off periods.
We define the \DODOSP as the following decision problem:
\begin{problem}
    \problemtitle{\DODOSP (\dodosp)}
    \probleminput{A number of days $D \in \IN$, a number of workers $N\in \IN$, bounds $\lw, \uw, \lo, \uo, \Uw, \Uo \in \IN$ on periods and bounds $0\leq\requestl{d}\leq\requestu{d}\leq N$ for each $d\in \CD$ on requests for all days.\\[1cm]}
    \problemsep{5pt}
    \problemquestion{Does there exist a \emph{feasible} schedule with $N$ workers? A schedule is called feasible if on every day $d$ at least $r_l^d$ and at most $r_u^d$ workers work and the bounds for every worker are respected, i.e.,
    \begin{enumerate}[before=\vspace{-0.5\baselineskip},after=\vspace{-\baselineskip},label=\textbullet]
        \item every work period of every worker is at least $l_w$ and at most $u_w$ days long,
        \item every off period of every worker is at least $l_o$ and at most $u_o$ days long,
        \item every worker works in total at most $U_w$ days and
        \item every worker has in total at most $U_o$ many days off.
    \end{enumerate}}
\end{problem}
The \dodosp was first introduced by \textcite{brunnerBoundedFlexibility2013}, who proposed a branch-and-price algorithm to solve real-world instances optimally. We have expanded their problem definition by adding an upper bound $\Uo$ on the total number of days off as well as day-specific upper bounds $\requestu{d}$ on the number of workers that should be scheduled to work on the given day.
\clearpage

We distinguish the problem's bounds in the following way: We call $l_w$ and $l_o$ the \emph{local lower bounds}, $u_w$ and $u_o$ the \emph{local upper bounds}, $U_w$ and $U_o$ the \emph{global (upper) bounds} and $\requestl{d}$ and $\requestu{d}$ the request bounds.

\begin{remark}
    Note that the definition of the \dodosp does not introduce global lower bounds. This is because they can easily be transformed into global upper bounds of the opposite type: A global lower bound of $L_w$ on the work days is equivalent to a second upper bound of $D-L_w$ concerning the number of days off. Thus, instead of using $L_w$, we would use $\min\lrb{U_o, D-L_w}$ as a global upper bound on the number of days off. The same argument holds for a global lower bound $L_o$ for the number of days off.
    Further, we point out that the schedule is not cyclic, i.e., is not repeated after $D$ days. Thus, a worker who works on the first day has to work for the first $\lw$ days.
    Nevertheless, all our hardness results from \Cref{sec: np completeness} can be extended to the \cdodosp, where we assume that the schedule is repeated. This affects the impact of local bounds, i.e., it would be feasible to work for only two days at the beginning of the schedule and $\lw-2$ days at the very end of the schedule.
\end{remark}%
\begin{figure}[htbp]
    \centering
    \includestandalone[ width=0.85\textwidth]{prob_intro_feasible_example}
    \caption{A feasible solution for the instance of the \dodosp  with exact demands (i.e., $\request{d}\coloneqq\requestl{d}=\requestu{d}$ for all $d$) given in the first row, work period bounds of $\lw=3$ and $\uw=5$, off period bounds of $\lo=2$ and $\uo=4$, as well as global bounds $\Uw=10$ and $\Uo=9$. Empty squares indicate a day off, while filled squares indicate a day on duty. }
    \label{fig: feasible example}
\end{figure}%

Next, we introduce two special cases which arise from ignoring a set of bounds. Note that instead of removing a bound from the problem definition we can fix it to the value given in \Cref{tab: trivial bounds}, which makes it trivially fulfilled. 

\begin{definition}
    We call the \dodosp restricted to instances with $\lw=\lo=1$ the \UDODOSP (\udodosp).
\end{definition}
\begin{definition}
    We call the \dodosp restricted to instances with $\Uw=\Uo=D$ the \LDODOSP (\ldodosp).
\end{definition}

We are interested in these two special cases since they can be decided in polynomial time, which we prove in \Cref{sec: upper bounds,sec: local bounds}. On the other hand, in \Cref{thm: NP-hard in general}, we show that the \dodosp is NP-complete as soon as it features both global bounds and local lower bounds.

In \Cref{sec: np completeness} we consider further special cases that arise from removing a given set of bounds. 

\begin{table}[htbp]
    \centering
    \renewcommand{\arraystretch}{1.4}
    \begin{tabular}{lcccccc c cc}
       bound & $l_w$ & $u_w$ & $l_o$ & $u_o$ & $U_w$ & $U_o$ & & $\requestl{}$ & $\requestu{}$\\\hline
       default value & 1 & $D$ & 1 & $D$ & $D$ & $D$ & & $0$ & $N$
    \end{tabular}
    \caption{Values to set bounds to in order to make them trivially fulfilled.}%
    \label{tab: trivial bounds}
\end{table}

From an application perspective, one might be interested in the case where $\requestu{d}=N$ for all $1\leq d\leq D$. In this case we only have a minimal work request that must be covered on each day. Besides a trivial exception, the same complexity results hold as in the general case where requests are bounded from both sides (cf. \Cref{tab: NP-hardness cases if request bounded by one side}). If the minimum and maximum requests are identical each day, we refer to the instance as having \emph{exact requests}. Analyzing the \udodosp with exact requests is simpler than doing so with general requests. Therefore, in \Cref{sec: upper bounds}, we will begin by examining the case with exact requests, followed by an analysis of the general requests case.

\subsection{Overview of main arguments}\label{subsec: prob introd - overview}

In \Cref{sec: upper bounds}, we study the \udodosp. To motivate why this special case might be easier to solve than the general \dodosp we first take a more heuristical look at it by considering the example in \Cref{fig: motivate udodosp}, where we are given an instance of the \dodosp with only two workers and exact requests ($\requestl{}=\requestu{}$). For days with a request of zero or two there is no decision to be made when designing a schedule. However, on days with request one we have to choose which worker will be on duty. For the example shown in \Cref{fig: motivate udodosp}, we are in the situation that before a certain day both workers were on duty for the same number of days. On the given day only the first worker $n_1$ is on duty. For the following day, one worker is requested and we have to decide who will be on duty.

\begin{figure}[htbp]
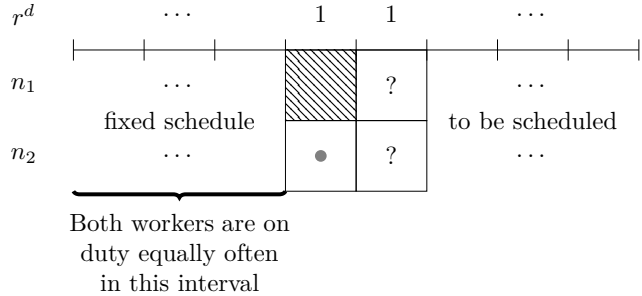

    \centering
    \includestandalone[mode=buildnew, subpreambles=true, width=0.55\textwidth]{motivate_udodosp}
    \caption{Assume the left part of the schedule is fixed already and we have to decide who is on duty on the following day. Given that on the first days both workers were on duty equally often all upper bounds ($\uw$, $\uo$, $\Uw$ and $\Uo$) \enquote{would prefer} that we switch who is on duty, while lower bounds \enquote{would prefer} us to keep $n_1$ working.}
    \label{fig: motivate udodosp}
\end{figure}

From the perspective of the global bound it seems obvious what to do: it would be better to assign $n_2$ to work on the following day since this equalizes the number of days off and days on duty between the workers.
From the perspective of the local upper bounds, it also seems to be beneficial to assign $n_2$ to work. In doing so we interrupt the work resp.\ off periods of both workers which gives us a better chance to not violate the local upper bounds.
However, from the perspective of local lower bounds it seems more reasonable to keep $n_1$ on duty for another day since otherwise we might end a work or off period before it reached its minimal length.
Thus, all upper bounds are \enquote{more likely} to be satisfied if we change who is on duty more often, while the lower bounds are \enquote{more likely} to be fulfilled if we change who is on duty less often.

This shared preference of all upper bounds for many changes in who is on duty will be turned into an efficient algorithm (for the case of exact requests) in \Cref{sec: upper bounds explicit requests} by simply assigning workers in a cyclic manner. Besides an algorithm that computes feasible schedules (if they exist), we also obtain feasibility criteria for the case of exact requests that are easy to check. Using these criteria we then generalize the result in \Cref{sec: upper bounds request intervals} to the \udodosp with daily upper and lower bounds $\requestu{}$ and $\requestl{}$ on the number of workers on duty. This is done in two stages by first choosing the number of workers that are on duty from the interval $\lre{\requestl{d},\requestu{d}}$ for each day $d$ in a way that preserves feasibility. This can be done by computing a feasible potential in a certain auxiliary graph. In the second stage we are left with the case of exact requests.

Since the results of \Cref{sec: upper bounds} heavily rely on maximizing the number of times a worker switches between days off and being on duty, they can not be generalized to also fulfill lower bounds. In fact the combination of local lower bounds with global upper bounds is what makes the \dodosp NP-hard. We prove this in \Cref{sec: np completeness} by reducing a packing problem to the \dodosp. To do so, we interpret the different workers as containers into which we have to pack given numbers. In this setting, a global upper bound corresponds to the capacity of each container. The numbers that have to be packed are encoded in a unary way in the sequence of requests. Thus one can think of a number to be in a container whenever the corresponding worker is on duty during the days that encode the corresponding number. The local lower bounds are needed to ensure that a number is not split between two containers or, equivalently, that only one worker is on duty during the encoding of that number. 

\begin{figure}[htbp]
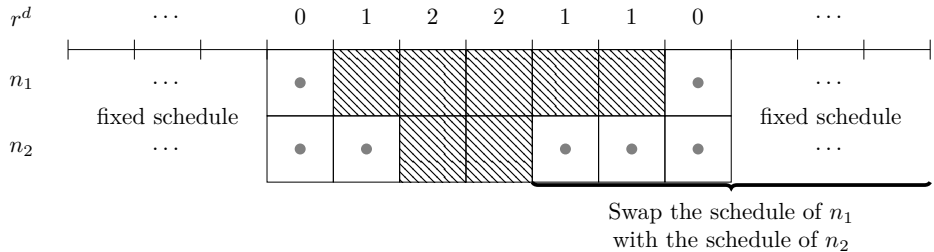

    \centering
    \includestandalone[mode=buildnew, subpreambles=true, width=0.8\textwidth]{motivate_ldodosp}
    \caption{The depicted schedule does not satisfy the FIFO property since the work period of $n_2$ starts later and finishes earlier than the one of $n_1$. If we would swap the suffixes of the schedules of $n_1$ and $n_2$, the FIFO property would hold (in the given part of the schedule). Note that swapping suffixes preserves feasibility with respect to local bounds.}
    \label{fig: motivate ldodosp}
\end{figure}

As described earlier, local lower bounds and local upper bounds have contradicting \enquote{preferences} when it comes to the number of times workers switch between being on duty and days off. However, if we would know for every day how many workers switch from being on duty to a day off and vice versa, it would be easy to construct a feasible schedule. This is because all local bounds benefit from a certain structure in the work and off periods of a schedule: whenever worker $n_1$ starts a work period earlier than another worker $n_2$ starts their work period, the work period of $n_1$ should not end later than the work period of $n_2$. The same should hold for off periods. We refer to this property as first-in-first-out (FIFO) property. \Cref{fig: motivate ldodosp} illustrates a situation that is avoided by the FIFO property. It also shows how to modify a feasible schedule in order to obtain the property. This modification preserves feasibility regarding local bounds, but might destroy feasibility regarding global bounds.

Since the FIFO property only distributes the number of work periods equally but does not take into account how long each work period is, it gives no guarantees in terms of global bounds (cf. \Cref{fig: no fifo instance}). In fact, even in schedules that satisfy the FIFO property, it can occur that the number of days on which two different workers are on duty, can differ by an arbitrary constant independent of $\lw$ and $\lo$. \Cref{fig: diff instance} illustrates what such an instance might look like.

\begin{figure}[htbp]
        \centering
    \begin{subfigure}{0.85\textwidth}
        \centering
        \includestandalone[mode=buildnew, subpreambles=true, width=0.8\textwidth]{diff_instance}
        \caption{Given $\lw = \lo = 2$, $\uw\geq3$, $\uo\geq3$ and $\Uw=\Uo=D=10$, the depicted schedule is the only solution of the instance with the given demands (up to symmetry in $n_1$ and $n_2$).
    If we expand the instance by repeating the pattern $0,1,2,1,1$ in the requests $c-2$ times, we receive an instance where one of the workers has to be on duty 
    $3c$ times, while the other works only on $2c$ days.}
    \label{fig: diff instance}
    \end{subfigure}
    \begin{subfigure}{0.85\textwidth}
        \centering
        \includestandalone[mode=buildnew, subpreambles=true, width=0.8\textwidth]{no_fifo}\caption{If we instead set $\lw=2$, $\Uw=5$ and fix all other bounds to their trivial value, the instance is feasible, but no feasible schedule fulfills the FIFO property.}
    \label{fig: no fifo instance}
    \end{subfigure}
    \caption{Instances demonstrating that some worker might have to be on duty more often than others, and that the FIFO property is not compatible with global bounds.}
\end{figure} %
\section{NP-completeness}\label{sec: np completeness}

In this section, we prove that the general \dodosp is NP-complete. Furthermore, we identify the NP-complete special cases that arise from fixing certain bounds to their default values.

Before investigating the completeness, let us first check that the \dodosp is actually in NP. Naively encoding a feasible schedule as a spreadsheet takes $\Oequal{ND}$ bits, while any instance of the \dodosp can be encoded using $\Oless{D\log N}$ bits. Therefore we will have to find another certificate. Without any bounds on the requests, the \dodosp reduces to a flow problem. We will use such a flow as a certificate for the general problem.

\begin{lemma}\label{lem: dodosp without requests}
    The \dodosp restricted to instances with $\requestl{d}=0$ and $\requestu{d}=N$ for each day $d$ can be decided in polynomial time in $D$.
\end{lemma}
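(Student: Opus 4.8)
The plan is to reduce the restricted \dodosp to deciding feasibility of an integral flow (equivalently, a transportation-type circulation) in an auxiliary digraph whose size is polynomial in $D$, and then invoke a standard polynomial-time max-flow / feasible-circulation algorithm. Since requests are now vacuous ($\requestl{d}=0$, $\requestu{d}=N$), the only constraints are the local bounds $\lw,\uw,\lo,\uo$ and the global bounds $\Uw,\Uo$, and crucially the workers are indistinguishable: a schedule exists if and only if there is a multiset of $N$ individual worker-sequences over $\{\WORK,\OFF\}^D$, each one respecting all six bounds (and the implicit demand that on no day do more than $N$ of them work, which is automatic). So feasibility depends only on \emph{which} length-$D$ worker-sequences are ``legal'' and whether we can pick $N$ of them (with repetition).

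First I would model a single legal worker-sequence as an $s$--$t$ path in a small layered digraph. Build a state graph with a layer for each day $d\in\{0,1,\dots,D\}$; a state records enough information to check the remaining constraints, namely (current status $\in\{\WORK,\OFF\}$, length of the current period so far, total work days so far, total off days so far). A priori this is $O(D^4)$ states, but one can prune: once the running work total would exceed $\Uw$ or the off total exceed $\Uo$ the state is dead, and the period-length coordinate is capped by $\max\{\uw,\uo\}\le D$; transitions encode ``continue the period'' (allowed while length $<\uw$ resp.\ $<\uo$) or ``switch'' (allowed only if the just-finished period had length $\ge\lw$ resp.\ $\ge\lo$). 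Legal worker-sequences correspond bijectively to $s$--$t$ paths (with $s$ the empty-state at layer $0$ and $t$ any accepting state at layer $D$ whose finished period also met its lower bound). Then "choose $N$ legal sequences with repetition" is exactly "route $N$ units of integral $s$--$t$ flow in this DAG with all edge capacities $=N$ (or $\infty$)"; integrality of flows on such a network is free, and a feasible integral flow of value $N$ decomposes into $N$ paths, i.e.\ into $N$ worker-sequences, which is a feasible schedule (conversely any feasible schedule gives such a flow by superposing its workers' paths). The graph has $O(D^4)$ vertices and $O(D^4)$ edges, so a max-flow computation runs in time polynomial in $D$, and $N$ written in binary is fine since flow algorithms are polynomial in $\log(\text{capacity})$; this certifies the lemma and also yields a schedule. (Alternatively, since there is no interaction between different days' demands here, one can collapse the "which day" coordinate and just solve a transshipment problem counting how many workers occupy each state-transition, which is the same thing more compactly.)

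The main obstacle is making sure the auxiliary graph really is polynomial-size and that the encoding is faithful at the two endpoints of the schedule: because the roster is \emph{not} cyclic (see the Remark), a worker who starts working on day $1$ must work a full block of at least $\lw$ days, and symmetrically the last period must also satisfy its lower bound -- these boundary conditions must be baked into the start state and the accepting states, and it is easy to get them slightly wrong. The only real subtlety beyond bookkeeping is handling the global bounds efficiently: if one is not careful the state space is genuinely $\Theta(D^4)$, which is polynomial but one should remark that it can be trimmed (e.g.\ the two totals are redundant with the day index up to the number of switches, so $O(D^3)$ or even $O(D^2\cdot(\text{\#switches}))$ states suffice). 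I expect the write-up to consist of: (i) the formal construction of the state DAG, (ii) the bijection between legal worker-sequences and $s$--$t$ paths, (iii) the equivalence "feasible schedule $\iff$ integral $s$--$t$ flow of value $N$", and (iv) the runtime bound via a standard max-flow routine, with a sentence noting that the flow also constitutes the NP certificate promised in the surrounding text.
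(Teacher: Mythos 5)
Your proposal is correct and matches the paper's proof essentially exactly: the paper builds the same layered state DAG with vertices $\lr{d,\TASK,a,b}$ recording the day, on/off status, position within the current period, and cumulative work days (the off total being recoverable as $d-b$, which is precisely the trimming to $\Oless{D^3}$ states you describe), and likewise characterizes feasibility as the existence of an integral $s$-$t$-flow of value $N$. The same network is then reused as the NP certificate in the subsequent proposition, just as you anticipate.
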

\begin{proof}
    We reduce this problem to a network flow problem in an acyclic graph $G=\lr{V, E}$ with
    \begin{align*}
        V=\lrb{s,t} &\cup\lrb{(1,\WORK,1,1),(1,\OFF,1,0)}\\
                    &\cup\lrb{\lr{d,\WORK, a, b}\,:\,2 \leq d \leq D,\,1\leq a\leq\uw,\,d-\Uo\leq b\leq\Uw}\\
                    &\cup\lrb{\lr{d,\OFF, a, b}\,:\,2\leq d \leq D,\,1\leq a\leq\uo,\,d-\Uo\leq b\leq\Uw}
    \end{align*}
    The work plan of a single worker can be represented as a sequence of these vertices of the form $s,\lr{1,\TASK_1,a_1,b_2},\dots,\lr{D,\TASK_D,a_D,b_D},t$. Besides $s$ and $t$, each vertex represents a possible state on a single day $d$ and contains three pieces of information:
    \begin{itemize}
        \item $\TASK_d$ encodes if the worker is on duty on day $d$ or not,
        \item day $d$ is the $a_d$-th day of the current $\TASK_d$ period and
        \item on the first $d$ days the worker was on duty $b_d$ times.
    \end{itemize}

    On the other hand such a sequence can be turned into a work plan of a single worker if and only if consecutive vertices $\lr{d,\TASK,a,b}$ and $\lr{d',\TASK',a',b'}$ are compatible with each other. This is the case if $d+1=d'$ and
    \begin{itemize}
        \item $\TASK=\TASK'=\WORK$ and $a+1=a'$ and $b+1=b'$ or
        \item $\TASK=\TASK'=\OFF$ and $a+1=a'$ and $b=b'$ or
        \item $\TASK=\WORK$, $\TASK'=\OFF$, $a\geq\lw$, $a'=1$ and $b'=b$ or
        \item $\TASK=\OFF$, $\TASK'=\WORK$, $a\geq\lo$, $a'=1$ and $b'=b+1$.
    \end{itemize}
    
    We add an edge from $\lr{d,\TASK,a,b}$ to $\lr{d',\TASK',a',b'}$ whenever they are compatible. We also add edges from $s$ to all tuples with $d=1$ and edges from all tuples with $d=D$ to $t$. Thus every feasible work plan of a single worker corresponds to an $s$-$t$-path in the constructed graph. %
    Note that this graph has no parallel edges and $\runtime{D^3}$ vertices and thus its size is polynomial in $D$. Furthermore, we reduce the size of the graph by imposing the conditions $a \leq d$ and $0 \leq b \leq d$.

    Feasible schedules correspond to (path decomposition of) integral $s$-$t$-flows of value $N$ in $G$, which can be computed in polynomial time. Hence it is easy to check if a feasible schedule exists.
\end{proof}

Now we can show that the \dodosp is in NP.

\begin{proposition}
    The \dodosp is in NP.
\end{proposition}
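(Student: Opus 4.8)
The goal is to exhibit a polynomial-size certificate for a YES-instance of the \dodosp together with a polynomial-time verification procedure. The plan is to reduce the general problem to the request-free version handled by \Cref{lem: dodosp without requests}, and use the resulting flow as the certificate, with the daily request bounds checked separately. Concretely, I would take as certificate an integral $s$-$t$-flow of value $N$ in the graph $G=(V,E)$ constructed in the proof of \Cref{lem: dodosp without requests}. Recall that this graph encodes, in each vertex $(d,\TASK,a,b)$, the day $d$, whether the worker is on duty, the position $a$ within the current period, and the cumulative number of work days $b$ so far; an $s$-$t$-path is exactly a feasible single-worker schedule with respect to $\lw,\uw,\lo,\uo,\Uw,\Uo$. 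Since $|V|=\Oless{D^3}$ and edge capacities are bounded by $N\leq D$ (or can be capped by $N$), such a flow has an encoding of size polynomial in the input size, so it is a legitimate certificate.

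The verification then proceeds in three steps. First, check that the submitted object is indeed a valid flow: nonnegativity, integrality, capacity bounds on each edge, and conservation at every internal vertex; this is linear in $|E|$. Second, check that its value out of $s$ equals $N$. Third — and this is the only step that needs a small new idea — check that the request bounds $\requestl{d}\leq(\text{workers on duty on day }d)\leq\requestu{d}$ hold for every $d$. The key observation is that the number of workers on duty on day $d$ in the schedule corresponding to the flow equals the total flow through the "layer" of vertices of the form $(d,\WORK,a,b)$, i.e.\ $\sum_{a,b}\bigl(\text{flow into }(d,\WORK,a,b)\bigr)$ (equivalently, the flow across the cut separating day-$(d{-}1)$ vertices from day-$d$ \WORK-vertices). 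So for each $d$ we simply sum the flow on the relevant edges and compare against $\requestl{d}$ and $\requestu{d}$; this is again polynomial. If all checks pass, we output YES. Correctness follows because a path decomposition of the flow yields $N$ single-worker schedules each feasible for the local and global bounds (by \Cref{lem: dodosp without requests}), and the layer-sum condition guarantees the daily requests are met; conversely any feasible schedule gives such a flow, so the certificate exists whenever the instance is a YES-instance.

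I do not anticipate a serious obstacle here; the main subtlety is purely bookkeeping: being careful that the flow value and the per-day on-duty counts are read off the correct edges of $G$ and that capacities are set (or capped at $N$) so that the certificate size stays polynomial in $D\log N$. One should also note explicitly that integrality of the certificate is part of what is verified — we do not need to invoke integrality of flow polytopes, since the prover simply hands us an integral flow and we check it. With these points addressed, the proposition follows in one short paragraph citing \Cref{lem: dodosp without requests}.
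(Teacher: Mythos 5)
Your proposal matches the paper's proof: both use an integral $s$-$t$-flow of value $N$ in the certificate graph of \Cref{lem: dodosp without requests} as the witness, and both verify the daily request bounds by summing the flow through the vertices $\lr{d,\WORK,a,b}$ for each day $d$. Your write-up just spells out the verification steps in more detail than the paper does.
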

\begin{proof}
    As a certificate, we can use an integral $s$-$t$-flow of value $N$ in the same network as in the proof of \Cref{lem: dodosp without requests}. As already mentioned the network is of polynomial size.
    The number of workers that work on day $d$ equals the total throughput of all vertices $\lr{d,\WORK, a, b}$ with $a,b\in\CD$ and can therefore be efficiently checked for feasibility.
\end{proof}

To show NP-completeness of the \DODOSP, we restrict ourselves to instances with exact requests.
\begin{lemma}\label{prp: NP-hard}
    The \dodosp is strongly NP-complete, even if restricted to instances where $\Uo=\uw=\uo=D$, $\lo=1$ and $\request{d} \coloneqq \requestl{d} = \requestu{d} \in\lrb{0,1}$.
\end{lemma}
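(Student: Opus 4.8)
The plan is to reduce from the strongly NP-complete \threepartition: given $3m$ integers $a_1,\dots,a_{3m}$ and a bound $B$ with $\sum_i a_i = mB$ and $B/4 < a_i < B/2$, decide whether the $a_i$ can be partitioned into $m$ triples each summing to $B$. I would build a \dodosp instance with $N = m$ workers and $D = mB + 3m - 1$ days, laid out as a \emph{block} of $a_1$ consecutive days with request $1$, then a single \emph{separator} day with request $0$, then a block of $a_2$ days with request $1$, another separator, and so on, ending with the block encoding $a_{3m}$. I set $\uw = \uo = \Uo = D$ and $\lo = 1$ (all trivial by \Cref{tab: trivial bounds}), $\Uw = B$, and $\lw = \lfloor B/4 \rfloor + 1$; the point of this choice is that $B/4 < a_i < B/2$ forces $\tfrac{a_i}{2} < \lw \le a_i$ for every $i$, and essentially the whole argument rests on this pair of inequalities.

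For the forward implication I would argue as follows. In any feasible schedule every request-$1$ day is worked by exactly one worker and every separator day by none, so no work period can contain a separator day; hence each work period lies inside a single block. Within a block of length $a_i$ the worked days split into maximal runs, and each such run is in fact a full work period (its two neighbouring days are either separators or days worked by somebody else), so it has length at least $\lw > a_i/2$; therefore a block contains at most one run and is worked in its entirety by a single worker. Summing, the total number of worked days over all workers is $\sum_i a_i = mB$, and since each of the $m$ workers works at most $\Uw = B$ days, each works exactly $B$ days, i.e.\ covers a collection of blocks whose sizes sum to $B$. As $B/4 < a_i < B/2$, such a collection has exactly three elements, so the blocks partition into $m$ triples summing to $B$ --- a \threepartition solution. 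Conversely, from a \threepartition solution I assign the three numbers of triple $j$ to worker $j$: every block has length $a_i \ge \lw$, so every work period is long enough; each worker works exactly $B = \Uw$ days; every separator forces an off period of length $\ge 1 = \lo$; the requests on all days are met; and every other bound is trivial, so the schedule is feasible. Membership in NP has already been established, and since $N$, $D$, $\Uw$ and all remaining parameters are $O(mB)$, which is polynomial in the unary size of the \threepartition instance, the reduction also yields \emph{strong} NP-completeness.

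The step I expect to be the real obstacle is the claim that a block cannot be split between two workers. It depends on picking a single $\lw$ that simultaneously satisfies $\lw \le a_i$ for all $i$ (otherwise some block admits no legal work period at all, and the instance is infeasible irrespective of the \threepartition answer) and $2\lw > a_i$ for all $i$ (so that two runs of length $\ge \lw$ cannot fit into one block); verifying that $\lw = \lfloor B/4\rfloor + 1$ does both --- which comes down to $a_i \ge \lceil B/4\rceil = \lfloor B/4\rfloor + 1$ and $a_i < B/2 \le 2\lfloor B/4\rfloor + 2 = 2\lw$ --- is the one place where I would carry out the arithmetic (a small case check on $B \bmod 4$) carefully rather than wave it through. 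A related small point to nail down is that a maximal run of a single worker inside a block genuinely is a work period, so that the $\lw$ lower bound really applies to it.
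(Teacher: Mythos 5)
Your proposal is correct and follows essentially the same reduction as the paper: a unary block-and-separator encoding of a restricted \threepartition instance with $N=m$, $\Uw$ equal to the target sum, and $\lw$ chosen so that $\lw\le a_i<2\lw$ forces each block to be covered by a single worker, with the counting argument $\sum_i a_i = N\Uw$ making the global bound tight. The only differences are cosmetic (you omit the trailing separator day and use $\lw=\lfloor B/4\rfloor+1$ where the paper uses $\lceil T/4\rceil$; both choices satisfy the required inequalities).
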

\begin{proof}
    For this proof, we will use a reduction from the strongly NP-complete \rthreepartition.

    \begin{problem}
        \problemtitle{\rthreepartition}
        \probleminput{A number $m\in \IN$ and a multiset  $A = \lrb{a_1,\dots,a_{3m}}\subset \IN$ of values with $\frac{1}{4}T < a_i < \frac{1}{2}T$ for all $a_i \in A$, where $T \coloneqq \frac{1}{m}\sum^{3m}_{i = 1}a_i$.\\[0.7cm]}
        \problemquestion{Does there exist a partition $A = \dot\bigcup_{i = 1}^m A_i$ with $|A_i| = 3$ and $\sum_{a\in A_i} a = T$ for all $1\leq i \leq m$?}
    \end{problem}

    Since the \rthreepartition is strongly NP-complete \parencite{gareyComputersIntractabilityGuide1979}, we may assume that $S = \sum_{i = 1}^{3m} a_i = mT \leq q\lr{m}$ for some universal polynomial $q$.\par\smallskip

    We construct an equivalent instance of the \dodosp. The idea of the construction is to identify the values $a_i$ with work periods of length $a_i$. Furthermore each subset $A_i$ of the partition will correspond to a worker who is on duty during the work periods that represent the elements of $A_i$.
    
    We set the number of days to $D = mT + 3m$ and the number of workers to $N = m$. Next, define $\request{d} = 0$ if $d=\sum_{i\in\lre{k}}\lr{a_i+1}$ for some $1\leq k\leq 3m$ and $\request{d}=1$ otherwise. Therefore, the requests are a unary encoding of the values $a_i$ with a $0$-request day serving as a separator between two consecutive values. Finally, we set $\Uw=T$, $\lw=\lceil\frac{1}{4}T\rceil$, and of course $U_o = \uw=\uo=D$ and $\lo=1$. An example of such a reduction is given in \Cref{fig: reduce 3-partition to dodosp}.
    \begin{figure}[htbp]
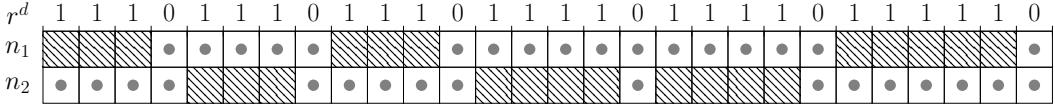

        \centering
        \includestandalone[width = \textwidth]{np_hardness_3_partition_2}
        \caption{Given a \threepartition instance with $m = 2$ and $A = \lrb{3,3,3,4,4,5}$, consider the \dodosp instance with the (exact) request shown in the figure as well as $\Uw=T=11$ and $\lw=\lceil\frac{1}{4}T\rceil=3$. Any solution to this \dodosp instance corresponds to a solution of the original problem. The shown solution correspond to the partition $\lrb{3,3,5},\lrb{3,4,4}$.}
        \label{fig: reduce 3-partition to dodosp}
    \end{figure}

    Any solution to the partition instance naturally translates to a feasible schedule. Indeed, given such a partition, we identify worker $n_i$ with the set $A_i$. The elements of $A_i$ correspond to three work periods. During these work periods $n_i$ is on duty while having all other days off.
    Since we are given a feasible partition, the upper bound $\Uw$ is respected. Furthermore, the lower bound $\lw$ is respected given that we consider the restricted version of the \threepartition and therefore $a_i\geq \lceil\frac{1}{4}T\rceil = \lw$ for all $i$.\smallskip

    To prove that the converse holds as well, namely that each feasible schedule can be transformed into a feasible partition, we assume that we are given a solution to the \dodosp instance and claim the following:
    \begin{claim}
        Every interval of days with request 1 representing a value $a_i$ will be covered by a single worker. 
    \end{claim}
    \begin{claimproof}
        Assume the claim would not hold, i.e., there exists an interval representing a value $a_i$ during which at least two workers $n_{j_1}$ and $n_{j_2}$ work.  Denote by $d_1 = \sum_{j<i}\lr{a_j+1}$ and $d_2 = \sum_{j\leq i}\lr{a_j+1}$ the day immediately before and immediately after the encoding of $a_i$. Both $n_{j_1}$ and $n_{j_2}$ do not work on those two days since the request for those days is 0. Hence, both of them work at least $\lw$ days in between $d_1$ and $d_2$. This is a contradiction since the total request in this time period is $a_i < \frac{1}{2}T \leq 2\cdot \lw$.
    \end{claimproof}
		From the bounds of the \dodosp, we can only deduce that the number of days a single worker works is \emph{at most} $\Uw=T$, but since the total request is $\sum^{3m}_{i = 1}a_i = mT = N\Uw$, this bound has to be tight for all workers in a feasible schedule. Therefore, a feasible schedule induces a partition of $A$ into sets $A_i$ with sum $T$ each. Finally, we have to prove that each $A_i$ has exactly three elements. This is a direct consequence of $\frac{1}{4}T<a_i<\frac{1}{2}T$.
		
    The size of the constructed instance and all occurring numbers are polynomially bounded in $m$ since $S \leq q(m)$ is bounded. Therefore, the reduction is polynomial.
\end{proof}

\begin{table}[htbp]
    \centering
    \renewcommand{\arraystretch}{1.4}
    \begin{tabular}{cll|c|l}
        req. bounds & \multicolumn{2}{c|}{additional bounds} & complexity & \multicolumn{1}{c}{proof}\\\hline
        $\requestu{d},\requestl{d}$&$U_x,l_y$ with $x,y\in \lrb{w,o}$ & or any superset & NP-compl. & \Cref{thm: NP-hard in general}\\
        $\requestu{d},\requestl{d}$&$\Uw,\Uo,\uw,\uo$ & or any subset & P & \Cref{thm: upper bounded dodosp}\\
        $\requestu{d},\requestl{d}$&$\uw,\uo,\lw,\lo$ & or any subset & P & \Cref{thm: local dodosp}\\[15pt]

        $\requestu{d}$&$\Uw,\uw,\lw,\lo$ & or any subset & P & \\ 
        $\requestu{d}$&$\Uw,\uo,\lw$ & or any superset & NP-compl. & \\
        $\requestu{d}$&$\Uo,l_x$ with $x\in \{o,w\}$ & or any superset & NP-compl. & \\
        $\requestu{d}$&$\Uw,\uo,\lo$ & or any superset & NP-compl. & 
    \end{tabular}
    \caption{Complexity of the \dodosp. Only the non-trivial request bounds are given. A lower bound on the requests is an upper bound on the number of workers who do not work on a given day. Therefore, we receive the same results for $\requestl{d}$, but with $w$ and $o$ swapped, for the last four rows.}%
    \label{tab: NP-hardness cases if request bounded by one side}
\end{table}

\begin{theorem}\label{thm: NP-hard in general}
    The \dodosp is NP-complete, even if $\requestl{d} = \requestu{d}$ for all $d\in \CD$ and if there is only one nontrivial global bound and one nontrivial local bound. 
\end{theorem}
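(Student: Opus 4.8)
The plan is to obtain the statement as a consequence of \Cref{prp: NP-hard} together with two lightweight variants of the reduction used there. Recall that \Cref{prp: NP-hard} already establishes (strong) NP-completeness of the \dodosp on instances with exact requests, $\uw = \uo = D$ and $\lo = 1$; on such instances the only nontrivial bounds are the single global bound $\Uw$ and the single local bound $\lw$. Since membership in NP has already been shown, this immediately proves the claim for the pairing $(\lw,\Uw)$. What remains is to cover the other natural pairings of one local lower bound with one global bound, namely $(\lw,\Uo)$, $(\lo,\Uw)$ and $(\lo,\Uo)$.

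For $(\lw,\Uo)$ I would re-run the reduction from \rthreepartition in the proof of \Cref{prp: NP-hard} essentially verbatim, but set $\Uw = D$ (trivial) and instead $\Uo = D - T$. The forward direction is unchanged: the schedule built from a valid partition has exactly $T$ work days, hence exactly $D-T$ off days, per worker. In the backward direction, $\Uo = D-T$ forces every worker to have at most $D-T$ off days, i.e., at least $T$ work days; since the total work demand is $\sum_{i=1}^{3m} a_i = mT$ and there are $N=m$ workers each working at least $T$ days, every worker works exactly $T$ days, exactly as before. The \enquote{one interval representing $a_i$ is covered by a single worker} claim uses only $\lw$ and the magnitudes of the requests, which are unchanged, and the cardinality argument from $\tfrac14 T < a_i < \tfrac12 T$ is untouched; so the reduction goes through and this instance class (only $\lw$ and $\Uo$ nontrivial) is NP-complete.

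Finally I would derive the two pairings involving $\lo$ from the ON/OFF symmetry of the \dodosp with exact requests. Sending a schedule $f$ to $\bar f$ defined by $\bar f(n,d) = \WORK$ iff $f(n,d) = \OFF$ exchanges work periods with off periods and work days with off days; hence an instance with bounds $(\lw,\uw,\lo,\uo,\Uw,\Uo)$ and exact requests $(\request{d})$ is feasible if and only if the instance with bounds $(\lo,\uo,\lw,\uw,\Uo,\Uw)$ and exact requests $(N-\request{d})$ is. Applying this to the two instance classes above turns $(\lw,\Uw)$ into $(\lo,\Uo)$ and $(\lw,\Uo)$ into $(\lo,\Uw)$, completing the case analysis (exactness of the requests is preserved, which is all the theorem asks for).

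I do not expect a real obstacle here; the only steps needing a little care are checking that the $\Uo$-variant still pins down each worker's number of work days -- the same counting argument with \enquote{on} and \enquote{off} interchanged -- and checking that the ON/OFF map is a bijection between the feasible schedules of the two instances, including the non-cyclic treatment of the initial period, which it is, since periods (the first one included) are defined symmetrically as inclusion-wise maximal runs. One may also remark that one global bound and one local lower bound are in fact necessary for hardness: dropping all global bounds (the \ldodosp) or all local lower bounds (the \udodosp) yields polynomial-time solvable problems, as shown in \Cref{sec: local bounds,sec: upper bounds}.
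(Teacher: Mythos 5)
Your proposal is correct and follows essentially the same route as the paper: the $(\lw,\Uw)$ case is \Cref{prp: NP-hard}, the $(\lw,\Uo)$ case replaces $\Uw=T$ by $\Uo=D-T$ exactly as the paper does, and your explicit ON/OFF complementation with requests $N-\request{d}$ is precisely what the paper means when it represents the $a_i$ by off periods (requests $N$ and $N-1$) and calls the remaining case ``symmetric''. No gaps.
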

\begin{proof}
    We have already proven this in the case that $\Uw$ and $\lw$ are nontrivial. The proofs for the special cases in which $\lo$ or $U_o$ are nontrivial work similarly, we just have to adapt the requests and the bounds: Fix an instance $\mathcal{I}$ of the \rthreepartition.\par
    
    If $\Uw$ and $\lo$ are nontrivial, we represent the values $a_i$ of $\mathcal{I}$ by off-periods. Thus the separator days that had a request of $0$ in the proof of \Cref{prp: NP-hard} will now have a request of $N$, while the request on all other days is $N-1$.
    Again, we set $\lo = \lceil\frac{1}{4}T\rceil$. Since for each worker the number of days off should sum up to $T$, we set $\Uw = D-T$. Here we use the fact that in the proof of \Cref{prp: NP-hard} all global bounds are fulfilled with inequality. 

    The case in which $U_o$ and $\lo$ are nontrivial is symmetric to the case in which $\Uw$ and $\lw$ are nontrivial. Hence, we again use requests of $N$ and $N-1$, set $U_o = T$, and $\lo = \lceil\frac{1}{4}T\rceil$.

    Finally, in the case that $U_o$ and $\lw$ are nontrivial, we again represent the values $a_i$ of $\mathcal{I}$ by work periods. We therefore use requests of $0$ and $1$ to build an equivalent instance of the \dodosp. Furthermore, we set $U_o = D-T$ and $\lw = \lceil\frac{1}{4}T\rceil$. 
\end{proof}

This proof can be adapted to other special cases whenever we have some proper encoding of the values $a_i$. This does not have to be a simple unary encoding. In fact using a less trivial encoding one can show similar complexity results for the case that the daily requests are only bounded from below or from above. The results are summarized in \Cref{tab: NP-hardness cases if request bounded by one side}.

\section{Upper bounds} \label{sec: upper bounds}

The primary goal of this section is to prove the following theorem:

\begin{restatable}{theorem}{thmudodosp}\label{thm: upper bounded dodosp}
The \UDODOSP can be decided in $\CO(D^2)$ time.
\end{restatable}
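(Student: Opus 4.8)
The plan is to split the \udodosp into two layers: \emph{how many} workers are on duty each day, and \emph{which} of the $N$ workers. For a schedule $f$, let $\work{d}$ be the number of workers on duty on day $d$; any feasible schedule satisfies $\requestl{d}\le\work{d}\le\requestu{d}$ and $0\le\work{d}\le N$. I would first characterise, in the case $\lw=\lo=1$, exactly which \emph{duty profiles} $(\work{1},\dots,\work{D})$ admit a feasible schedule that realises them, and then observe that the realisable profiles are precisely the integral solutions of a difference-constraint system; deciding an \udodosp instance then reduces to searching for a negative cycle in an auxiliary digraph with $\CO(D)$ vertices and $\CO(D)$ arcs.

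For the characterisation, set $P_0\coloneqq 0$ and $P_d\coloneqq\sum_{i=1}^{d}\work{i}$. I claim that, when $\lw=\lo=1$, a profile with $0\le\work{d}\le N$ is realisable iff: (i) $P_{a+\uw}-P_{a-1}\le N\uw$ for all $1\le a\le D-\uw$; (ii) $P_{a+\uo}-P_{a-1}\ge N$ for all $1\le a\le D-\uo$; (iii) $P_D\le N\Uw$; and (iv) $P_D\ge N(D-\Uo)$. Necessity is an averaging/pigeonhole argument: if some window of $\uw+1$ consecutive days carried more than $N\uw$ worker-days, some worker would be on duty on all of them, a work period of length $\uw+1$; symmetrically a window of $\uo+1$ days carrying fewer than $N$ worker-days has at least $N\uo+1$ idle slots, forcing some worker to be off on all $\uo+1$ days; and since the $P_D$ on-duty worker-days of the horizon are shared among $N$ workers, in any schedule some worker is on duty on at least $\lceil P_D/N\rceil$ days and some on at most $\lfloor P_D/N\rfloor$ days, which yields (iii) and (iv). For sufficiency I would exhibit a single explicit witness --- the cyclic \emph{round-robin} schedule: write the $P_D$ on-duty slots of the horizon as the integers $1,2,\dots,P_D$, cut into the consecutive blocks $(P_{d-1},P_d]$, and put worker $n_j$ on duty on day $d$ exactly when the block $(P_{d-1},P_d]$ contains an integer congruent to $j$ modulo $N$. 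Since $\work{d}\le N$ this is well defined; each worker is on duty on either $\lfloor P_D/N\rfloor$ or $\lceil P_D/N\rceil$ days (so (iii) and (iv) give the global bounds); and a short computation shows that a maximal on-duty run of length $\ell$ forces its $\ell$ days to carry at least $N(\ell-1)+1$ worker-days while a maximal off-duty run of length $\ell$ forces its $\ell$ days to carry at most $N-1$ worker-days, so (i) and (ii) give the two local bounds.

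Given the characterisation, the general \udodosp asks whether there exist integers $P_0=0\le P_1\le\dots\le P_D$ with $\requestl{d}\le P_d-P_{d-1}\le\requestu{d}$ for every $d$ satisfying (i)--(iv). Every one of these is an inequality $P_v-P_u\le c$ with integer $c$, so I would build the standard constraint digraph on the vertex set $\{0,1,\dots,D\}$ --- one arc for each request bound, one arc for each window inequality of type (i) and of type (ii), two arcs for (iii) and (iv), and a zero-weight arc from a super-source to every vertex --- giving $\CO(D)$ vertices and $\CO(D)$ arcs. The system has an integral solution iff this digraph has no negative cycle; if it has none, a feasible potential given by the shortest-path distances from the super-source (computed by the Bellman--Ford--Moore algorithm) yields such a $P$, hence a profile $(\work{d})$, which the round-robin assignment turns into a genuine feasible schedule. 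The negative-cycle search by Bellman--Ford--Moore costs $\CO(|V|\cdot|E|)=\CO(D^2)$, which dominates; reading off the profile and, if desired, writing out the full $N\times D$ schedule afterwards is routine.

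The main obstacle is the sufficiency direction of the characterisation: one must verify that the \emph{one} round-robin schedule is simultaneously extremal for all four bound families --- in particular that perfectly balancing the on-duty counts (which makes the global bounds tight) is compatible with keeping \emph{both} the on-duty and the off-duty runs as short as (i) and (ii) allow --- and one has to handle runs that touch day $1$ or day $D$ separately, since the schedule here is not cyclic; those boundary runs satisfy the relevant counting inequalities only in a weaker one-sided form, which is harmless but must be checked. A secondary, purely bookkeeping point is confirming that the window, request and global constraints together contribute only $\CO(D)$ arcs and that integrality is preserved throughout, so the negative-cycle search really runs within the claimed $\CO(D^2)$ budget.
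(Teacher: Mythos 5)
Your proposal is correct and follows essentially the same route as the paper: the same four window/global counting inequalities on the partial sums, the same cyclic round-robin (representative) assignment to prove sufficiency, and the same reduction of the resulting difference-constraint system to negative-cycle detection via Bellman--Ford--Moore in $\CO(D^2)$. The only cosmetic differences are that the paper proves the global bounds by summing over all workers rather than by a pigeonhole on the extremal worker, and that your worry about boundary runs is unnecessary since every run of length $\uw+1$ (resp.\ $\uo+1$) already lies inside some window covered by the inequalities.
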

Because the \udodosp can be solved by an easier and faster algorithm if restricted to instances with exact requests, we first consider this case in \Cref{sec: upper bounds explicit requests}. In \Cref{sec: upper bounds request intervals}, we then adapt the arguments to the general \udodosp.

\subsection{Exact requests}\label{sec: upper bounds explicit requests}

In this subsection, we assume that $\request{d}\coloneqq\requestl{d} = \requestu{d}$. We exploit that all upper bounds \enquote{prefer} that workers swap between work and off periods as often as possible to evenly distribute the work days among the workers. If we ensure such an even distribution, it is sufficient to consider the total workload of certain consecutive days. 
We therefore define $\Request{d} = \sum_{i = 1}^d \request{i}$ to be the partially summed requests for $0 \leq d \leq D$.\par\medskip

There exist some easy-to-check inequalities which must be fulfilled by feasible instances.

\begin{lemma}\label{lem: ineq feas period counters upper bounds}
    Any feasible instance of the \udodosp with exact requests must fulfill the following four inequalities:
    \begin{align*}
        \Request{D} &\leq N\cdot \Uw\tag{req. $\Uw$}\label{ieq: RD <= NUw}\\
        N\cdot D - \Request{D} &\leq N\cdot \Uo \tag{req. $\Uo$}\label{ieq: ND-RD <= NUo}\\
        \Request{d+\uw}- \Request{d-1} &\leq N\cdot \uw &\forall 1 \leq d \leq D- \uw \tag{req. $\uw$}\label{ieq: <= Nuw}\\
        \Request{d+\uo} - \Request{d-1}&\geq  N &\forall 1 \leq d \leq D - \uo \tag{req. $\uo$}\label{ieq: >= N}
    \end{align*}
\end{lemma}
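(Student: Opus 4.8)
The plan is to establish all four inequalities by elementary double counting, exploiting that with exact requests every feasible schedule puts \emph{exactly} $\request{d}$ workers on duty on each day $d$. Consequently $\Request{D}=\sum_{d=1}^{D}\request{d}$ is the number of pairs $(n_i,d)$ with $n_i$ on duty on day $d$, and $ND-\Request{D}$ is the number of such pairs with $n_i$ off on day $d$; more generally, for any set of days the sum of the corresponding $\request{}$-values equals the number of on-duty pairs over those days.

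For \eqref{ieq: RD <= NUw} and \eqref{ieq: ND-RD <= NUo} I would simply sum the global per-worker bounds: since each worker is on duty on at most $\Uw$ days, summing over the $N$ workers gives $\Request{D}\le N\Uw$, and since each worker is off on at most $\Uo$ days we get $ND-\Request{D}\le N\Uo$. For \eqref{ieq: <= Nuw}, fix $d$ with $1\le d\le D-\uw$ and look at the block $B=\{d,d+1,\dots,d+\uw\}$ of $\uw+1$ consecutive days. If some worker were on duty on every day of $B$, then $B$ would lie inside a single work period of that worker, which would then have length at least $\uw+1$, contradicting the local upper bound $\uw$; hence each worker is on duty on at most $\uw$ of the days in $B$, and summing over all workers yields $\Request{d+\uw}-\Request{d-1}=\sum_{i\in B}\request{i}\le N\uw$. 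Inequality \eqref{ieq: >= N} is the mirror image: for $1\le d\le D-\uo$ and the block $B'=\{d,\dots,d+\uo\}$ of $\uo+1$ days, a worker off on all of $B'$ would have an off period of length at least $\uo+1$, contradicting $\uo$; so every worker is on duty on at least one day of $B'$, and summing gives $\Request{d+\uo}-\Request{d-1}\ge N$.

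I do not expect a genuine obstacle here; the statement is really a collection of necessary conditions. The only points that need a moment of care are the off-by-one in the block sizes (it is a block of $\uw+1$, not $\uw$, consecutive days that forces a violation of the local upper bound) and being explicit that feasibility of the instance is what licenses both the exact-request observation and the appeal to the per-worker bounds. These inequalities will presumably resurface later as (part of) a sufficient feasibility criterion for the \udodosp with exact requests.
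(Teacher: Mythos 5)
Your proof is correct and follows essentially the same double-counting argument as the paper: sum the per-worker global bounds for \eqref{ieq: RD <= NUw} and \eqref{ieq: ND-RD <= NUo}, and observe that on any window of $\uw+1$ (resp.\ $\uo+1$) consecutive days no worker can be on duty (resp.\ off) throughout, then sum over workers. The extra care you take with the off-by-one in the block sizes is a slight elaboration of what the paper states more tersely, not a different approach.
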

\begin{proof}
    Given a feasible instance of the \udodosp, we fix some feasible schedule. In this schedule each worker is at most $\Uw$ times on duty. Since feasibility implies that the total demand equals the total number of work days, summing over all workers gives \Cref{ieq: RD <= NUw}.

    Furthermore each worker has at most $\Uo$ days off, in other words they are at least $D-\Uo$ times on duty. Again summing over all workers gives \Cref{ieq: ND-RD <= NUo}.

    Fix any $1\leq d\leq D-\uw$. In the feasible schedule, no worker works on all days in the interval $d,\dots,d+\uw$, meaning each worker is on duty at most $\uw$ times. Thus, the total request of this interval is at most $N\cdot\uw$, proving \Cref{ieq: <= Nuw}.

    Fix any $1\leq d\leq D-\uo$. In the feasible schedule no worker can take the entire interval $d,\dots,d+\uo$ off, meaning each worker is on duty at least once during that interval. Thus, the total request of this interval is at least $N$, proving \Cref{ieq: >= N}.
\end{proof}

These inequalities are not only necessary but sufficient to characterize feasible instances of the \udodosp with exact requests. We will show this by stating \Cref{alg: dodosp with only upper bounds single number} which, given an instance fulfilling the inequalities in \Cref{lem: ineq feas period counters upper bounds}, returns a corresponding feasible schedule. 

In the algorithm, we expand the set of workers $\lrb{n_j\,:\, j\in\CN}$ to a set of worker-representatives $\lrb{n_j\,:\, j\in\N}$ and identify each worker $n_i$ with the representatives $n_{i+k\cdot N}$ for $k\in\N$. Each representative is on duty at most once. Finally, a worker is on duty whenever one of its representatives is.
\Cref{fig: representatives algo} provides a visualization of how the algorithm operates.%

\begin{minipage}{0.95\linewidth}%
\medskip%
\centering%
\begin{algorithm}[H]
    \SetKwInOut{Input}{input}
    \SetKwInOut{Output}{output}
    \Input{\udodosp instance which fulfills the inequalities in \Cref{lem: ineq feas period counters upper bounds}}
    \Output{A feasible schedule $\CN\times\CD\to\lrb{\text{ON},\text{OFF}}$.}
    \DontPrintSemicolon%
    \caption{\udodosp}\label{alg: dodosp with only upper bounds single number}
    \setstretch{1.35}
	Identify $n_j=n_{j+N}$\;
    \For{day $d = 1,\dots,D$}{
        assign exactly the workers with representatives $n_j$ for $j = \Request{d-1}+1,\dots, \Request{d-1} + \request{d}$ to work on day $d$\;
    }
	\Return computed schedule\;%
\end{algorithm}%
\medskip%
\end{minipage} 

The idea behind the algorithm is to assign the work days in a cyclic and the most evenly distributed way possible. This is also reflected in the two lemmas below.

Note that two representatives of the same worker cannot be assigned to work on the same day, since the request of each day is at most $N$.

\begin{lemma}\label{lem: two consec repr work}
    Assume that a worker $n_j$ is on duty on days $d_1$ and $d_2$, with $d_1 < d_2$. Then all other workers have to work on at least one of the days $d_1,\dots,d_2$. Furthermore, if the representatives $n_{j+kN}$ and $n_{j+\lr{k+1}N}$ are working on days $d_1$ and $d_2$, every other worker has a representative $n_i$ with $j+kN<i<j+\lr{k+1}N$ who works on one of these days.
\end{lemma}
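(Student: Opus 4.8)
The plan is to exploit the simple, monotone way in which \Cref{alg: dodosp with only upper bounds single number} hands representative indices out to days. First I would record the following bookkeeping fact: on day $d$ the algorithm puts on duty exactly the representatives whose index $i$ satisfies $\Request{d-1} < i \le \Request{d}$ (using $\Request{d} = \Request{d-1} + \request{d}$), and since $\Request{0}=0$ these blocks partition $\lrb{1,\dots,\Request{D}}$ as $d$ ranges over $\lrb{1,\dots,D}$. Hence a representative $n_i$ works at all iff $1 \le i \le \Request{D}$, and in that case it works on the unique day $d$ with $\Request{d-1} < i \le \Request{d}$; because $\Request{}$ is non-decreasing, this day is a non-decreasing function of $i$. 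Days $d$ with $\request{d}=0$ have an empty block and cause no trouble.

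Next I would fix witnesses for the hypothesis: let $n_{j+k_1N}$ be a representative of $n_j$ on duty on $d_1$ and $n_{j+k_2N}$ one on duty on $d_2$. Then $\Request{d_1-1} < j+k_1N \le \Request{d_1}$ and $\Request{d_2-1} < j+k_2N$, and because $d_1 < d_2$ we have $\Request{d_1} \le \Request{d_2-1}$; chaining these inequalities gives $j+k_1N < j+k_2N$, hence $k_1 < k_2$, i.e.\ $k_1+1 \le k_2$. Now look at the $N-1$ consecutive indices $j+k_1N+1,\dots,j+(k_1+1)N-1$. Reduced modulo $N$ these hit precisely the $N-1$ residue classes other than that of $j$, so they are representatives of the $N-1$ workers distinct from $n_j$, exactly one per worker. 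Each such index is at least $j+k_1N+1 > \Request{d_1-1}$ and at most $j+(k_1+1)N-1 < j+(k_1+1)N \le j+k_2N \le \Request{d_2}$, so it lies strictly above $\Request{d_1-1}$ and at most $\Request{d_2}$; by the bookkeeping fact it is therefore assigned to some day in $\lrb{d_1,\dots,d_2}$. This shows every worker other than $n_j$ works on at least one of the days $d_1,\dots,d_2$, proving the first claim.

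The \enquote{furthermore} part is the special case $k_1=k$, $k_2=k+1$: then $k_1+1=k_2$, so the $N-1$ indices above are exactly the integers $i$ with $j+kN < i < j+(k+1)N$, and we have already shown each corresponding representative is on duty on one of the days $d_1,\dots,d_2$.

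I do not expect a real obstacle; the points needing care are merely keeping the index blocks consistently defined (so that zero-request days and the interval endpoints behave) and the elementary fact that $N-1$ consecutive integers realise every residue class mod $N$ except one. The main drafting effort is to state the index-to-day correspondence precisely enough that the inequality chase above is manifestly correct.
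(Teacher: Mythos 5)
Your proof is correct and takes essentially the same route as the paper's: both arguments rest on the fact that the algorithm hands out representative indices in consecutive order, so the $N-1$ indices strictly between two successive representatives of $n_j$ hit every other worker exactly once and must all be scheduled on days between $d_1$ and $d_2$. You merely make the index-to-day correspondence and the inequality chase explicit where the paper argues more tersely (reducing w.l.o.g.\ to consecutive representatives of $n_j$).
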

\begin{proof}
    The assigned representatives form a consecutive sequence throughout the algorithm. We may assume that the representatives of $n_j$ that work on days $d_1$ and $d_2$ are consecutive. Thus they are the representatives $n_{j + kN}$ and $n_{j + (k+1)N}$ for some $k\in\N$. Due to the consecutiveness, the representatives $n_{j + kN + l}$ with $1\leq l < N$ are on duty on a day from $d_1$ to $d_2$, inclusive. Consequently, all other workers must also work at least one day.
\end{proof}

An informal rephrasing of this lemma would be: before a worker is an duty another time, all other workers must have been on duty at least once more.

\begin{figure}[htbp]
    \centering
    \includestandalone[width = 0.9\textwidth]{representatives_example}
    \caption{Illustration of \Cref{alg: dodosp with only upper bounds single number}. The assignment of work days to representatives is shown on the left. The corresponding schedule is shown on the right.}
    \label{fig: representatives algo}
\end{figure}

\begin{lemma}\label{lem: diff num shifts upp bound}
    The number of times two workers were on duty differs by at most one at any point throughout the algorithm. The same holds for the number of days off that two workers have had.  
\end{lemma}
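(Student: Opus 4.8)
The plan is to track, step by step, exactly which representative indices have been handed out by Algorithm \ref{alg: dodosp with only upper bounds single number}. First I would show, by induction on the day $d$, that after days $1,\dots,d$ have been processed the set of representative indices that have been assigned is exactly the initial segment $\lrb{1,2,\dots,\Request{d}}$. This is immediate: $\Request{0}=0$, and on day $d$ the algorithm assigns precisely the indices $\Request{d-1}+1,\dots,\Request{d-1}+\request{d}=\Request{d}$. More generally, at every intermediate point of the execution the set of assigned indices is an initial segment $\lrb{1,\dots,j}$ for some $0\leq j\leq \Request{D}$, since indices are handed out in strictly increasing order; so it suffices to analyze such a generic state.

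Next I would count, for a worker $n_i$ with $i\in\lre{N}$, how many of its representatives $n_i,n_{i+N},n_{i+2N},\dots$ lie in $\lrb{1,\dots,j}$. Writing $j=qN+s$ with $0\leq s<N$, this count equals $q+1$ when $i\leq s$ and equals $q$ when $s<i\leq N$ (the two representatives of a worker are never both assigned on the same day since $\request{d}\leq N$, so these counts really are the numbers of on-duty days). Hence at any point of the algorithm the number of days on which two workers have been assigned to work differs by at most $1$, which is the first claim. Equivalently, this can be packaged as an invariant maintained throughout the loop: after $qN+s$ assignments, workers $n_1,\dots,n_s$ are on duty $q+1$ times and workers $n_{s+1},\dots,n_N$ are on duty $q$ times.

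For the second claim I would simply observe that after day $d$ has been fully processed, worker $n_i$ has had exactly $d$ minus its current on-duty count many days off; since the on-duty counts of any two workers differ by at most $1$, so do their numbers of days off.

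I do not expect a genuine obstacle here: this is a bookkeeping argument rather than a substantive one. The only points requiring care are the precise correspondence between \say{a state of the algorithm} and \say{an initial segment of representative indices}, and the elementary floor arithmetic in the count; additionally, the statement about days off is naturally read as holding after each completed day, since within a day the off-status of the current day is not yet determined.
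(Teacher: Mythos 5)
Your proof is correct, but it takes a different route from the paper's. The paper argues by induction on the number of times a worker has been assigned: it shows that if some worker has worked $w$ days at some point, then every other worker has worked at least $w-1$ days, with the inductive step supplied by \Cref{lem: two consec repr work} (before a worker is assigned again, every other worker must receive an intermediate representative). You instead bypass that lemma entirely and compute the on-duty counts explicitly: the assigned representative indices always form an initial segment $\lrb{1,\dots,j}$, and writing $j=qN+s$ the count for worker $n_i$ is exactly $q+1$ if $i\leq s$ and $q$ otherwise. Your argument is therefore more quantitative --- it identifies precisely which workers are ahead by one, rather than merely bounding the gap --- and it is self-contained, whereas the paper's proof leans on the structural lemma it has just established (which it needs anyway for the later feasibility arguments). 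Both proofs correctly note the prerequisite that two representatives of the same worker are never assigned on the same day because $\request{d}\leq N$, and your handling of the days-off half (days off equals the day index minus the on-duty count, so the gaps coincide) is the same observation the paper records in the remark following the lemma. No gap; this is a valid alternative derivation.
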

\begin{proof}
    It is enough to show that if a worker $n$ has worked $w$ days at some point throughout the algorithm, then all the other workers have worked at least $w-1$ many days. This is true from the beginning up to and including the first day $n$ works. Now assume it is true after we assign $n$ to work on some day $d_1$ for the $w$'th time, meaning all other workers have been on duty at least $w-1$ times. Due to \Cref{lem: two consec repr work}, by the time $n$ works another time, all other workers must have also been on duty once more, thus, the inequality still holds.
\end{proof}

\begin{theorem}
    The schedule computed by \Cref{alg: dodosp with only upper bounds single number} satisfies the work requests and respects upper bounds $\uw$, $\uo$, $\Uw$ and $\Uo$.
\end{theorem}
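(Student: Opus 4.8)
The plan is to verify the four feasibility properties one at a time, using the two structural lemmas (\Cref{lem: two consec repr work} and \Cref{lem: diff num shifts upp bound}) together with the four inequalities from \Cref{lem: ineq feas period counters upper bounds} that the input is assumed to satisfy. First, the work requests: by construction, on day $d$ the algorithm assigns exactly the representatives $n_j$ for $j=\Request{d-1}+1,\dots,\Request{d-1}+\request{d}$, which is a block of exactly $\request{d}$ consecutive representatives; since $\request{d}\leq N$, these correspond to $\request{d}$ distinct workers, so exactly $\request{d}$ workers are on duty on day $d$, as required.

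Next I would handle the two local upper bounds. For $\uw$: suppose for contradiction some worker $n$ works on all of $\uw+1$ consecutive days $d,d+1,\dots,d+\uw$. Then in particular $n$ works on day $d$ and on day $d+\uw$ with at least one full "lap" of representatives in between (since at most one representative works per day and two representatives of the same worker never work on the same day, working $\uw+1$ times in a window forces the representative index to advance by at least $N$ — actually more carefully: the representatives assigned over these days form a consecutive block, and $n$ having two representatives in a window of total demand $\Request{d+\uw}-\Request{d-1}$ means this demand exceeds $N\cdot\uw$, contradicting \Cref{ieq: <= Nuw}). The cleanest route: the total demand over days $d,\dots,d+\uw$ is $\Request{d+\uw}-\Request{d-1}\le N\uw$ by \Cref{ieq: <= Nuw}; but if $n$ worked all $\uw+1$ of these days, by \Cref{lem: two consec repr work} applied to consecutive pairs of working days of $n$, every other worker also works at least $\uw$ times in this window, giving total demand at least $N\uw + 1$ — wait, I need to be careful to get the count right; the honest argument is that $n$ works $\uw+1$ times and each of the other $N-1$ workers works at least $\uw$ times among these $\uw+1$ days (by iterating \Cref{lem: diff num shifts upp bound} over the window), so the demand is at least $(\uw+1) + (N-1)\uw = N\uw + 1 > N\uw$, contradicting \Cref{ieq: <= Nuw}. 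The bound $\uo$ is symmetric: if some worker $n$ is off on all of $d,\dots,d+\uo$, then $n$ contributes $0$ work days to this window while every other worker contributes at most $\uo$ — actually I want a lower bound on demand here — hmm, rather: \Cref{ieq: >= N} says the demand over any window of length $\uo+1$ is at least $N$, i.e., on average at least one worker per such window; but if $n$ is off the whole window, the $N-1$ other workers must cover demand $\ge N$ over $\uo$... this needs the dual of \Cref{lem: two consec repr work} (an analogous statement for off periods, which follows by the same cyclic-block argument applied to the complement). So the main subtlety is making the off-period counting mirror the on-period counting.

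Then the global bounds. For $\Uw$: by \Cref{lem: diff num shifts upp bound}, at the end of the run the numbers of work days of any two workers differ by at most one, and they sum to $\Request{D}$; hence the maximum is at most $\lceil \Request{D}/N\rceil$, which is at most $\Uw$ exactly when $\Request{D}\le N\Uw$, i.e., by \Cref{ieq: RD <= NUw}. For $\Uo$: the number of days off of each worker is $D$ minus the number of work days, so again any two differ by at most one and they sum to $ND-\Request{D}$; the maximum is at most $\lceil(ND-\Request{D})/N\rceil\le\Uo$ precisely by \Cref{ieq: ND-RD <= NUo}.

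**Main obstacle.** The genuinely delicate part is the treatment of the off-period bound $\uo$ (and, relatedly, the lower-bound inequality \Cref{ieq: >= N}): \Cref{lem: two consec repr work} is stated only for two days on which a worker \emph{works}, so I would first need the mirror statement — if a worker is off on $d_1<d_2$ with no work day in between being "skipped over a full lap", then every other worker is off at least once in $[d_1,d_2]$ — or, more simply, argue directly that since the assigned-representative block advances by exactly $\request{d}$ each day and never repeats a worker within a day, a worker being off for $\uo+1$ consecutive days means the block advanced by at most $(N-1)(\uo+1) $ over those days while... the bookkeeping here is where I would spend the most care, converting "consecutive-block, cyclic, evenly-spread" into the clean inequality $\Request{d+\uo}-\Request{d-1}\ge N$. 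Everything else is a short deduction from the two lemmas plus the hypotheses.
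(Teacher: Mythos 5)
Your treatment of the work requests, $\uw$, $\Uw$, and $\Uo$ matches the paper's proof in substance: the requests follow from the block assignment, the $\uw$ case uses \Cref{lem: two consec repr work} on consecutive pairs of work days to get every other worker on duty at least $\uw$ times in the window (total demand at least $\lr{\uw+1}+\lr{N-1}\uw = N\uw+1$, contradicting \Cref{ieq: <= Nuw}), and the global bounds follow from \Cref{lem: diff num shifts upp bound} plus \Cref{ieq: RD <= NUw} and \Cref{ieq: ND-RD <= NUo} --- your ceiling formulation is equivalent to the paper's contradiction argument.

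The genuine gap is the $\uo$ case, which you correctly flag as the main obstacle but do not close. You propose proving a ``dual'' of \Cref{lem: two consec repr work} for off days, but that lemma's mechanism (representatives forming a consecutive cyclic block) speaks only about work assignments, and it is not clear the mirrored statement even holds in the form you sketch. The paper avoids any new lemma by using the existing one in the contrapositive: suppose $n'$ is off on all of $\CD'=\lrb{d_1,\dots,d_{\uo+1}}$ and some other worker $n''$ works on two days $d'<d''$ in $\CD'$; then \Cref{lem: two consec repr work} applied to $n''$ forces \emph{every} other worker, in particular $n'$, to work on some day in $\lrb{d',\dots,d''}\subseteq\CD'$, a contradiction. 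Hence each of the other $N-1$ workers works at most once in $\CD'$, so the total demand over the window is at most $N-1<N$, contradicting \Cref{ieq: >= N}. With that substitution your proof is complete; without it, the $\uo$ bound remains unproven.
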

\begin{proof}
    The work requests are satisfied by construction. Now assume the schedule that is computed by \Cref{alg: dodosp with only upper bounds single number} is not feasible and therefore violates a bound. We will show that this leads to a contradiction concerning the assumption that the instance fulfills the inequalities in \Cref{lem: ineq feas period counters upper bounds}. 

    \begin{enumerate}
        \item[$\Uw$:] Assume the computed schedule violates $\Uw$. Denote by $w_n$ the number of days worked by worker $n$. Since $\Uw$ is violated, there exists a worker $n'$ for which $w_{n'} > \Uw$. Given \Cref{lem: diff num shifts upp bound}, this implies that $w_{n} \geq \Uw$ for all workers $n \in \CN$. Since the computed schedule fulfills all requests exactly, it holds that
        \begin{equation*}
            \sum^{D}_{d = 1} \request{d} = \sum_{n \in \CN} w_n > N\cdot \Uw.
        \end{equation*}
        This contradicts \Cref{ieq: RD <= NUw} from \Cref{lem: ineq feas period counters upper bounds}. 

        \item[$\uw$:] Assume the computed schedule violates $\uw$. Hence, there exists a worker $n'$ and an interval $\mathcal{D}' = \lrb{d_1,\dots,d_{\uw+1}}$ of $\uw+1$ consecutive days on which worker $n'$ works.

        \Cref{lem: two consec repr work} applied to all consecutive pairs of days within $\mathcal{D}'$ implies that every other worker is on duty at least $\uw$ times in the interval $\mathcal{D}'$ (note that each application of \Cref{lem: two consec repr work} gives us a different representative of each worker).
        The situation is outlined in \Cref{fig: violation of u_w}.
        Hence, the summed request of days $d_1,\dots,d_{\uw+1}$ is
        \begin{equation*}
            \sum_{d = d_1}^{d_{\uw+1}} \request{d} > N\cdot \uw
        \end{equation*}
        which contradicts \Cref{ieq: <= Nuw} from \Cref{lem: ineq feas period counters upper bounds}. 

        \begin{figure}[htpb]
            \centering
            \includestandalone[width=0.5\linewidth]{uw_violation}
            \caption{Illustration of the case where $n_2$ violates the bound $u_w = 6$ with $N = 4$.}
            \label{fig: violation of u_w}
        \end{figure}
    \end{enumerate}
    With similar arguments one can show that the schedule computed by \Cref{alg: dodosp with only upper bounds single number} respects the bounds $\Uo$ and $\uo$ as well. Therefore, the algorithm returns a feasible schedule.
\end{proof}

 As observed before, this result implies that we can characterize instances with a feasible solution and determine feasibility in linear time. 

\begin{corollary}\label{cor: exact udodosp}
    An instance of the \udodosp has a feasible solution if and only if it fulfills the inequalities in \Cref{lem: ineq feas period counters upper bounds}. Thus, feasibility can be checked in $\CO(D)$. 
\end{corollary}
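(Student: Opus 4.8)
Since we are in the exact-requests case $\request{d}=\requestl{d}=\requestu{d}$ of this subsection, the claimed equivalence is obtained by simply combining the two implications already proved. The forward direction (``feasible $\Rightarrow$ the four inequalities hold'') is nothing but \Cref{lem: ineq feas period counters upper bounds}, so there is nothing further to do there. For the reverse direction I would argue as follows: assuming the four inequalities \Cref{ieq: RD <= NUw,ieq: ND-RD <= NUo,ieq: <= Nuw,ieq: >= N} hold, the instance is a valid input to \Cref{alg: dodosp with only upper bounds single number}, and by the theorem immediately preceding this corollary the schedule it outputs satisfies every request and respects $\uw$, $\uo$, $\Uw$ and $\Uo$ --- which are exactly the constraints of the \udodosp with exact requests. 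Hence that output is a feasible schedule, so the instance is feasible, and the equivalence follows.

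\textbf{Running time.} Next I would verify the $\CO(D)$ bound. The plan is to precompute the prefix sums $\Request{0},\dots,\Request{D}$ in a single left-to-right sweep, which takes $\CO(D)$. Given these, checking \Cref{ieq: RD <= NUw} and \Cref{ieq: ND-RD <= NUo} is one comparison each, hence $\CO(1)$; and each of the two families \Cref{ieq: <= Nuw} and \Cref{ieq: >= N} amounts to at most $D$ inequalities, every one of which is a single subtraction and comparison once the prefix sums are stored, so the whole sweep over $d$ costs $\CO(D)$. Summing the contributions, the feasibility test runs in $\CO(D)$, in the usual model where an arithmetic operation on the numbers occurring in the instance is $\CO(1)$.

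\textbf{Expected obstacle.} I do not expect a genuine obstacle here --- the argument is essentially bookkeeping on top of the theorem and \Cref{lem: ineq feas period counters upper bounds}. The one point I would be careful to state explicitly is that, in the reverse direction, \Cref{alg: dodosp with only upper bounds single number} really does return a bona fide \emph{schedule} in the sense of the problem definition: every worker--day pair is assigned a value (immediate from the loop over all days $d\in\lre{D}$), and no worker has two of its representatives scheduled on the same day --- which is exactly the observation recorded just before \Cref{lem: two consec repr work} that this cannot occur because $\request{d}\le N$. With that remark in place, the corollary is immediate.
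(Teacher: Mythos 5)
Your proposal is correct and matches the paper's (implicit) argument exactly: necessity is \Cref{lem: ineq feas period counters upper bounds}, sufficiency is the correctness theorem for \Cref{alg: dodosp with only upper bounds single number}, and the $\CO(D)$ bound follows from checking the $\CO(D)$ inequalities via prefix sums. Your extra remark that no two representatives of the same worker coincide on a day (since $\request{d}\leq N$) is the same observation the paper records just before \Cref{lem: two consec repr work}, so there is nothing to add.
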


\begin{remark}
    Note that the running time of \Cref{alg: dodosp with only upper bounds single number} depends on the chosen output format. If an $N\times D$-matrix is returned, the running time is not polynomial but pseudopolynomial since $N$ is super-polynomial in the input size.  However, as already shown in \Cref{sec: np completeness}, polynomial representations for a schedule exist. 
    In this case, we can choose the cyclic interval $\left(\Request{d-1},\Request{d}\right]\,\text{mod}\,N$ for each day as an encoding of the workers on duty. Hence, if we want to determine if a worker $n$ works on day $d$, we must check if $n \in \left(\Request{d-1},\Request{d}\right]\,\text{mod}\,N$, which is equivalent to $1 \leq (n-R^{d-1} \mod N)\leq r^d$. 
    
    Note that this encoding does not work for the general \dodosp. More precisely, a feasible instance of the \dodosp may become infeasible if we restrict ourselves to solutions in which the workers on duty form a cyclic interval for each day.
\end{remark}

\subsection{Request intervals} \label{sec: upper bounds request intervals}
The results from \Cref{sec: upper bounds explicit requests} can be generalized to the case where we have request intervals for each day.

The goal is to compute the exact number $\work{d}$ of workers that should work on a given day, namely the value which will take the place of $\request{d}$. To this end, we demand that the values $\work{d}$ fulfill the inequalities in \Cref{lem: ineq feas period counters upper bounds}, together with $\requestl{d}\leq\work{d}\leq\requestu{d}$. Similar to before we again work with partial sums. Therefore, for $0\leq d\leq D$, we define
\begin{equation*}
    \Work{d} = \sum_{i=1}^d\work{i}.
\end{equation*}
This results in the following consequence of \Cref{cor: exact udodosp}:

\begin{corollary}\label{cor: udodosp feasibility}
    An instance of the \udodosp is feasible if and only if there are integers $\Work{d}$ for $0\leq d\leq D$ where
    \begin{align*}
        \Work{D}-\Work{0} &\leq N\cdot \Uw\\[5pt]
        ND + \Work{0} - \Work{D} &\leq N\cdot \Uo\\
        \Work{d+\uw}- \Work{d-1} &\leq N\cdot \uw &&\forall 1 \leq d \leq D- \uw \\
        N&\leq \Work{d+\uo} - \Work{d-1}&&\forall 1 \leq d \leq D - \uo\\
        \requestl{d}\leq\Work{d}-\Work{d-1}&\leq\requestu{d}&&\forall 1 \leq d \leq D.
    \end{align*}
\end{corollary}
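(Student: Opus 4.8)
The plan is to reduce to the exact-requests case already settled in \Cref{cor: exact udodosp}. The governing observation is that a feasible schedule for an interval instance, read off day by day, is a feasible schedule for the exact-requests instance whose daily demand equals the number of workers actually on duty that day; and conversely, any exact-requests demand vector lying within the prescribed intervals yields a schedule that is feasible for the interval instance.

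For the ``only if'' direction I would start from a feasible schedule of the \udodosp instance and set $\work{d}$ to be the number of workers on duty on day $d$, with $\Work{d}=\sum_{i=1}^d\work{i}$ (so in particular $\Work{0}=0$). Feasibility immediately gives $\requestl{d}\le\work{d}\le\requestu{d}$, which is the last family of inequalities. Moreover the very same schedule is feasible for the exact-requests instance with demand $\work{d}$, so \Cref{lem: ineq feas period counters upper bounds} (equivalently the forward implication of \Cref{cor: exact udodosp}) applied to that instance yields the remaining four families: substituting $\Work{0}=0$ into the displayed inequalities, \Cref{ieq: RD <= NUw,ieq: ND-RD <= NUo,ieq: <= Nuw,ieq: >= N} become exactly the first four lines of the corollary.

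For the ``if'' direction, given integers $\Work{d}$ satisfying the five families, define $\work{d}\coloneqq\Work{d}-\Work{d-1}$. The last family gives $\requestl{d}\le\work{d}\le\requestu{d}$, and since $0\le\requestl{d}$ and $\requestu{d}\le N$ we get $\work{d}\in\{0,\dots,N\}$, so the $\work{d}$ form a legitimate exact-requests vector. Its partial sums satisfy $\sum_{i=1}^d\work{i}=\Work{d}-\Work{0}$, and plugging this identity into the four non-request lines of the corollary recovers precisely the inequalities of \Cref{lem: ineq feas period counters upper bounds} for this exact-requests instance -- the $\Work{0}$ terms cancel in the $\uw$- and $\uo$-inequalities and appear as exactly the stated offset in the $\Uw$- and $\Uo$-inequalities. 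Hence \Cref{cor: exact udodosp} produces a feasible schedule with exactly $\work{d}$ workers on duty on day $d$; since $\requestl{d}\le\work{d}\le\requestu{d}$, this schedule is feasible for the original interval instance.

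There is essentially no hard step here; the whole argument is bookkeeping around the reduction. The only points needing a little care are: tracking the constant offset $\Work{0}$ (which need not be zero in the ``if'' direction but cancels correctly everywhere it should), confirming integrality and the range $\{0,\dots,N\}$ of the derived demands $\work{d}$, and using the fact that a feasible schedule for an exact-requests instance has, by definition, exactly $\work{d}$ workers on duty each day -- which is precisely what lets feasibility be transferred in both directions between the interval instance and the exact-requests instance.
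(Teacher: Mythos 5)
Your proof is correct and follows exactly the route the paper intends: the corollary is obtained by reducing to the exact-requests case via \Cref{cor: exact udodosp}, with $\work{d}=\Work{d}-\Work{d-1}$ playing the role of the exact demand; the paper merely states this as an immediate corollary, whereas you spell out the bookkeeping (the $\Work{0}$ offset, integrality, and $\work{d}\in\lrb{0,\dots,N}$) explicitly. No gaps.
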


Note that any solution can be shifted such that $\Work{0}=0$. Since each inequality bounds the difference of two variables by a constant, one can solve the inequality system described in \Cref{cor: udodosp feasibility} using any algorithm that computes a feasible potential in a digraph. Using the Moore-Bellman-Ford algorithm \parencite{bellmanRoutingProblem1958,fordNetworkFlowTheory1956,mooreShortestPathMaze1959} this can be done in $\CO(D^2)$ time.
\thmudodosp*

\begin{remark}
    The theorem only considers the decision problem, but we can also compute a feasible schedule if one exists. We first compute $\Work{d}$ for $0\leq d\leq D$ with $\Work{0}=0$. From then on, we continue as in \Cref{sec: upper bounds explicit requests}. Again the time complexity depends on the representation of the schedule.
\end{remark} %
\section{Local bounds}\label{sec: local bounds}
In this section, we consider the \ldodosp in where there are no bounds on the total number of days a single worker is on duty. Since to our knowledge, solving this problem is not any easier for instances with exact requests compared to general instances, we immediately deal with the general case. 
The main goal of this section is to prove the following:

\begin{restatable}{theorem}{thmldodosp}\label{thm: local dodosp}
The \LDODOSP can be decided in $\CO(D^2)$ time.
\end{restatable}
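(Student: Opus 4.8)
The plan is to mimic the structure of the proof for the \udodosp (\Cref{thm: upper bounded dodosp}): first identify a set of necessary inequalities that any feasible instance of the \ldodosp must satisfy, then show these inequalities are also sufficient by exhibiting an algorithm that constructs a feasible schedule whenever they hold, and finally observe that the inequality system can be tested for solvability in $\CO(D^2)$ time via a feasible-potential / negative-cycle computation with the Moore--Bellman--Ford algorithm. The key structural idea, already foreshadowed in \Cref{subsec: prob introd - overview}, is the FIFO property: among all feasible schedules there is one in which, whenever a work (resp.\ off) period of worker $n_1$ starts no later than that of worker $n_2$, it also ends no later. Under the FIFO property the whole schedule is determined by, for each day $d$, the number $\work{d}$ of workers on duty together with the number of workers who \emph{switch} from off to on (or on to off) on day $d$; equivalently, it is determined by certain monotone "layer" counts. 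So the first step is to reformulate feasibility of the \ldodosp as the existence of integers describing these per-day switch counts satisfying local-bound constraints, and to show via a suffix-swapping exchange argument (as in \Cref{fig: motivate ldodosp}) that FIFO schedules are without loss of generality.

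Concretely, I would introduce, for each day $d$, a variable $\Work{d}=\sum_{i\le d}\work{i}$ (partial sums of the chosen per-day demand, with $\requestl{d}\le \work{d}\le \requestu{d}$), together with variables tracking how the work periods and off periods currently "in progress" are aged. The local lower bound $\lw$ forces: a worker who started a work period cannot stop before $\lw$ days have elapsed; dually for $\lo$, and the upper bounds $\uw,\uo$ force a stop once the period reaches that length. The FIFO reduction lets me aggregate this over all workers: if at day $d$ the demand increases from $\work{d-1}$ to $\work{d}$, the $\work{d}-\work{d-1}$ newly-hired workers must be ones who have been off for at least $\lo$ days, and symmetrically for decreases. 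Carrying out the bookkeeping, feasibility reduces to a system of difference constraints on the partial sums $\Work{d}$ (and possibly on auxiliary partial sums over windows of length $\lw,\uw,\lo,\uo$), each constraint being of the form $\Work{b}-\Work{a}\le$ const or $\ge$ const, with coefficients $N$, $\requestl{d}$, $\requestu{d}$, and the bounds. This is exactly the shape handled in \Cref{prp: solve udodosp inequalities}: build a digraph on vertices $v^0,\dots,v^D$, one edge per inequality, edge weights given by the constants; a feasible integral potential exists iff there is no negative cycle, detectable in $\CO(D^2)$ by Moore--Bellman--Ford since the graph has $\CO(D)$ vertices and $\CO(D)$ edges. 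Reconstructing a schedule from the potential then proceeds by assigning workers greedily respecting FIFO, again in (pseudo)polynomial time depending on the output representation.

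The main obstacle I expect is the sufficiency direction — proving that the difference-constraint system really captures feasibility and that the suffix-swapping exchange argument genuinely normalizes an arbitrary feasible schedule into a FIFO one without breaking any local bound. The subtlety (flagged in the text around \Cref{fig: no fifo instance,fig: diff instance}) is that swapping suffixes to enforce FIFO can change the \emph{lengths} of individual work periods, so one must check that after the swap every period still has length in $[\lw,\uw]$ (resp.\ $[\lo,\uo]$): this works precisely because the swap happens at a day where both workers are simultaneously within a period of the same type, so period boundaries — and hence the local constraints — are untouched for both. One then needs a potential-function or induction argument showing finitely many such swaps reach a FIFO schedule. Given that the \ldodosp drops the global bounds $\Uw,\Uo$ entirely, these swaps are "free" (no global obstruction, unlike the general \dodosp), which is why the problem is polynomial; I would make this the crux of the argument. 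Once FIFO normalization and the exact correspondence between FIFO schedules and the difference system are established, the running time claim follows verbatim from the Moore--Bellman--Ford bound as in \Cref{prp: solve udodosp inequalities}.
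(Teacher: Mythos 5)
Your overall strategy --- aggregate the schedule into per-day counts, express all local bounds as difference constraints, and decide solvability with Moore--Bellman--Ford on a graph with $\CO(D)$ vertices and edges --- is the same as the paper's, and the $\CO(D^2)$ bound is obtained identically (cf.\ \Cref{prp: solve ieq system}). There are two substantive differences. First, the paper's variables are sharper than yours: rather than partial sums $\Work{d}$ of the chosen demands plus unspecified ``aging'' variables, it uses the cumulative number $\Startw{d}$ of work periods started by day $d$ and the cumulative number $\Termw{d}$ of work periods ended before day $d$ (so $\Startw{d}-\Termw{d}$ is the number on duty on day $d$, and your ``switch counts'' are exactly their increments). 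With these two families every local bound becomes a single difference constraint, e.g.\ $\Termw{d+\lw}\leq\Startw{d}$ and $\Startw{d}\leq\Termw{d+\uw}$; your plan gestures at this but never commits to the variables that make it work. Second, and more importantly, what you flag as the ``main obstacle'' --- normalizing an arbitrary feasible schedule into a FIFO one by suffix swaps --- is a detour the paper avoids entirely. Necessity of the inequalities is proved directly for \emph{any} feasible schedule, because $\Startw{d}$ and $\Termw{d}$ are aggregate counts that do not care which worker owns which period; FIFO enters only in the sufficiency direction, where the explicit construction (\Cref{alg: feasible period counters to schedule}) happens to output a FIFO-structured schedule. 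So no exchange argument, and no termination argument for it, is needed.

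If you nevertheless keep the normalization route, one claim needs repair: it is not true that a suffix swap leaves ``period boundaries --- and hence the local constraints --- untouched.'' Swapping at a day where both workers are inside a work period splices the prefix of one period onto the suffix of the other, so both period lengths change; what saves you is that the two new lengths are sandwiched between the two old lengths and hence still lie in $\lre{\lw,\uw}$ (dually for off periods). That is provable, but it is exactly the bookkeeping the paper's formulation lets you skip.
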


The key idea behind the proof is to consider the work periods and off periods, from a perspective similar to the one used in \parencite{thompsonImprovedImplicitOptimalModeling1995}. Instead of directly computing a schedule, we first determine how many work periods start and end on a given day. More precisely we introduce variables $\Startw{d}$ and $\Termw{d}$ for all $d\in\CD$. 
By $\Startw{d}$ we denote the number of work periods that start on some day $d'$, with $d'\leq d$.
By $\Termw{d}$, we denote the number of working periods that terminate \emph{before} day $d$, i.e., we count the work periods whose last day is at the latest $d-1$.

\begin{definition}\label{def: period counter}
    Given $D$ days and $N$ workers, integers $\lr{\Startw{d},\Termw{d}}_{d\in\CD}$ are called \emph{period counters} if there exists some work schedule for $N$ workers on $D$ days such that for each $d\in\CD$
    \begin{itemize}
        \item on the first $d$ days $\Startw{d}$ work periods begin and
        \item on the first $d-1$ days $\Termw{d}$ work periods end.
    \end{itemize}
    We say that the period counters \emph{represent} such a schedule.
    Note that, in general, this schedule can be any map $\CN\times\CD\to\lrb{\WORK, \OFF}$ and does not have to fulfill any bounds or requests. An example is depicted in \Cref{fig: period counter local bounds}.
\end{definition}

\begin{figure}[htbp]
    \centering
    \includestandalone[width=0.7\linewidth]{period_counters}
    \caption{A schedule together with the set of period counters which represent that schedule. Note that $\Termw{d}$ denotes the work periods which end on the first $d-1$ days. The corresponding off period counters are given in gray.}
    \label{fig: period counter local bounds}
\end{figure}

One could also use $\Starto{d}$ and $\Termo{d}$ to count the off periods instead. Even though it is not necessary to count both, it is sometimes convenient to do so. Therefore, we begin by pointing out the connection between $\lr{\Startw{d},\Termw{d}}_{d\in\CD}$ and $\lr{\Starto{d},\Termo{d}}_{d\in\CD}$.

\begin{lemma}\label{lem: off <-> work}
    Given (work) period counters $\lr{\Startw{d},\Termw{d}}_{d\in\CD}$ and corresponding off period counters $\lr{\Starto{d},\Termo{d}}_{d\in\CD}$ the following equations hold:
    \begin{align*}
        \Startw{1}+\Starto{1} &= N&&\\
        \Starto{d}&=\Termw{d}+\Starto{1}=\Termw{d}+N-\Startw{1}&&\forall 1\leq d\leq D\\
        \Termo{d}&=\Startw{d}-\Startw{1}&&\forall 1\leq d\leq D
    \end{align*}
\end{lemma}
\begin{proof}
    On the first day, each worker starts either a work period or an off period.
    
    From the second day onward a worker starts an off period if and only if that worker terminates a work period on the previous day.
    
    Similarly, from the second day onward, a worker starts a work period if and only if this worker terminates an off period on the previous day.
\end{proof}

Of course not every sequence of numbers forms period counters. For example, there cannot be more than $N$ periods starting on the same day. Some similar conditions are captured in \Cref{lem: period counter}.

\begin{lemma}\label{lem: period counter}
    Period counters $\lr{\Startw{d},\Termw{d}}_{d\in\CD}$ fulfill $\Termw{1}=0$ and the following inequalities for all $1\leq d\leq D-1$:
    \begin{align}
        \tag{non decr.}\label{ieq: monotonicity} \Startw{d}&\leq\Startw{d+1},\quad\Termw{d}\leq\Termw{d+1}\\
        \tag{pos. work}\label{ieq: start before termination}
        \Termw{d+1}&\leq\Startw{d}\\
        \tag{pos. off} \label{ieq: break before next}
        \Startw{d+1}&\leq\Termw{d}+N
    \end{align}
\end{lemma}
\begin{proof}
    No work period can terminate before day $1$, hence $\Termw{1}=0$.
    On each day a non-negative number of work periods start and terminate. As $\Startw{d}$ and $\Termw{d}$ are partial sums of these non-negative values, they must be non-decreasing and therefore \Cref{ieq: monotonicity} must hold.

    Each work period has a positive length. Since each work period counted by $\Termw{d+1}$ ends on day $d$ or earlier, this work period has to start on day $d$ or earlier and is therefore also counted by $\Startw{d}$. This proves \Cref{ieq: start before termination}.

    For the same reason, it is the case that $\Termo{d+1}\leq\Starto{d}$. Together with \Cref{lem: off <-> work} this gives us \Cref{ieq: break before next}.
\end{proof}

So far, we have considered period counters that might correspond to any assignment $\CN\times\CD\to\lrb{\WORK, \OFF}$. However, we are primarily interested in feasible assignments.

\begin{definition}\label{def: feasible period counter}
    Given an \ldodosp instance, period counters $\lr{\Startw{d},\Termw{d}}_{d\in\CD}$ are called \emph{feasible} if they represent a feasible schedule.
\end{definition}

\begin{lemma}\label{lem: feasible period counter}
    Feasible period counters $\lr{\Startw{d},\Termw{d}}_{d\in\CD}$ fulfill the following inequalities:
    \begin{align}
        \tag{bound. $\lw$}\label{ieq: lw boundary}
        \Termw{\lw}&=0,\quad\Startw{D-\lw+1}=\Startw{D}&&\\
        \tag{count. $\lw$}\label{ieq: lw}
        \Termw{d+\lw}&\leq\Startw{d}&&\forall 1\leq d\leq D-\lw\\
        \tag{count. $\uw$}\label{ieq: uw}
        \Startw{d}&\leq\Termw{d+\uw}&&\forall 1\leq d\leq D-\uw\\
        \tag{bound. $\lo$}\label{ieq: lo boundary}
        \Startw{1}&=\Startw{\lo},\quad\Termw{D-\lo+1}=\Termw{D}&&\\
        \tag{count. $\lo$}\label{ieq: lo}
        \Startw{d+\lo}-N&\leq\Termw{d}&&\forall 1\leq d\leq D-\lo\\
        \tag{count. $\uo$}\label{ieq: uo}
        \Termw{d}+N&\leq\Startw{d+\uo}&&\forall 1\leq d\leq D-\uo\\
        \tag{count. req.}\label{ieq: request}
        \requestl{d}&\leq\Startw{d}-\Termw{d}\leq\requestu{d}&&\forall 1\leq d\leq D
    \end{align}
\end{lemma}
\begin{proof}
    If a work period would terminate before day $\lw$ or start after day $D-\lw+1$ then it would be shorter than $\lw$ and thus \Cref{ieq: lw boundary} must hold.
    Only the work periods that start on day $d$ or earlier are allowed to terminate before day $d+\lw$, which gives \Cref{ieq: lw}.
    Similarly, each work period that starts at day $d$ or earlier must terminate before day $d+\uw$, which immediately gives \Cref{ieq: uw}.
    With the same arguments we obtain \Cref{ieq: lo boundary} as well as
    \begin{equation*}
        \Termo{d+\lo}\leq\Starto{d}\quad\text{and}\quad \Starto{d}\leq\Termo{d+\uo}.
    \end{equation*}
    Using \Cref{lem: off <-> work} this gives us \Cref{ieq: lo} and \Cref{ieq: uo}.
    
    For \Cref{ieq: request} it is sufficient to observe that the number of workers that are on duty on day $d$ equals the number of work periods that contain day $d$. These are exactly the work periods that start on day $d$ or earlier but do not terminate before day $d$.
\end{proof}

Next, we show that the inequalities stated thus far are not only necessary but also sufficient for period counters to be feasible.
To prove this, we restrict ourselves to schedules with the following FIFO property: whenever a work period $W_1$ starts earlier than another work period $W_2$, $W_1$ does not terminate later than $W_2$. The same should hold for off periods.

\begin{remark}
    Note that in the general \dodosp, we cannot restrict ourselves to such schedules. For example, the instance shown in \Cref{fig: reduce 3-partition to dodosp} has no feasible schedule in which the off periods fulfill the described property. 
\end{remark}

To prove that the inequalities in \Cref{lem: feasible period counter} are not only necessary, but also sufficient for feasibility, we explicitly construct a schedule from period counters that satisfy the inequalities and show that this schedule is feasible.
To this end, we again expand the set of workers $\lrb{n_j\,:\, j\in\CN}$ to a set of worker-representatives $\lrb{n_j\,:\, j\in\N}$ and identify each worker $n_i$ with the representatives $n_{i+k\cdot N}$ for $k\in\N$. Each representative will obtain at most one work period. Finally, as before, a worker is on duty whenever one of its representatives is. The construction is given in \Cref{alg: feasible period counters to schedule}. For the period counters in \Cref{fig: period counter local bounds}, \Cref{alg: feasible period counters to schedule} would compute exactly the depicted schedule.

\begin{minipage}{0.95\linewidth}%
\medskip%
\centering%
\begin{algorithm}[H]
    \SetKwInOut{Input}{input}
    \SetKwInOut{Output}{output}
    \Input{\ldodosp instance, period counters which fulfill the inequalities in \Cref{lem: period counter} and \Cref{lem: feasible period counter}}
    \Output{A feasible schedule $\CN\times\CD\to\lrb{\text{ON},\text{OFF}}$}
    \DontPrintSemicolon%
    \caption{\ldodosp period counters to schedule}\label{alg: feasible period counters to schedule}
    \setstretch{1.35}
	Identify $n_j=n_{j+N}$\;
    \For{day $d = 1,\dots,D$}{
        assign worker/representative $n_j$ to work if and only if $\Termw{d}<j\leq\Startw{d}$\;
    }
	\Return computed schedule\;%
\end{algorithm}
\medskip%
\end{minipage}

\begin{lemma}\label{lem: one period per representative}
    The output of \Cref{alg: feasible period counters to schedule} has the following properties:
    \begin{enumerate}
        \item\label{itm: relevant representatives} Only the representatives $n_1$ to $n_{\Startw{D}}$ are assigned to work.
        \item\label{itm: each representative one period} Each of the representatives $n_1$ to $n_{\Startw{D}}$ is assigned exactly one work period.
        \item\label{itm: period representative correspondence} The $i$-th work period of a worker equals the work period associated with their $i$-th representative.
    \end{enumerate}
\end{lemma}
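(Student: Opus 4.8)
My plan is to read off the structure of the schedule directly from \Cref{alg: feasible period counters to schedule}: representative $n_j$ is on duty on day $d$ exactly when $\Termw{d} < j \le \Startw{d}$. Because $\Startw{\cdot}$ and $\Termw{\cdot}$ are non-decreasing by \Cref{ieq: monotonicity}, the condition $\Startw{d}\ge j$ holds precisely on a terminal block of days and $\Termw{d}<j$ holds precisely on an initial block of days, so the set of days on which $n_j$ works is an interval $[\alpha_j,\gamma_j]$ (possibly empty), where $\alpha_j \coloneqq \min\{\,d : \Startw{d}\ge j\,\}$ and $\gamma_j \coloneqq \max\{\,d : \Termw{d} < j\,\}$. \Cref{itm: relevant representatives} is then immediate: if $j > \Startw{D}$ then $\Startw{d}\le\Startw{D}<j$ for every day $d$, so $n_j$ is never assigned.

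For \Cref{itm: each representative one period} I would fix $j \le \Startw{D}$, so that $\alpha_j$ is well defined (and $\le D$), and observe $\gamma_j\ge1$ since $\Termw{1}=0<j$ by \Cref{lem: period counter}. It then suffices to show $\alpha_j\le\gamma_j$, because $[\alpha_j,\gamma_j]$ is a single non-empty contiguous block, hence the unique work period of $n_j$. If instead $\alpha_j>\gamma_j$, put $d\coloneqq\alpha_j\ge2$; minimality of $\alpha_j$ gives $\Startw{d-1}<j$, and \Cref{ieq: start before termination} then gives $\Termw{d}\le\Startw{d-1}<j$, contradicting $\Termw{d}\ge j$ (which holds as $d>\gamma_j$).

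\Cref{itm: period representative correspondence} is the heart of the argument and rests on two observations. First, \emph{FIFO among representatives}: for $j_1<j_2$ one has $\alpha_{j_1}\le\alpha_{j_2}$ and $\gamma_{j_1}\le\gamma_{j_2}$, simply because $\{d:\Startw{d}\ge j_2\}\subseteq\{d:\Startw{d}\ge j_1\}$ and $\{d:\Termw{d}<j_2\}\subseteq\{d:\Termw{d}<j_1\}$. Second, \emph{a genuine gap between consecutive representatives of the same worker}: if $j+N\le\Startw{D}$, then $\alpha_{j+N}\ge\gamma_j+2$. To prove this I would assume $\alpha_{j+N}\le\gamma_j+1$ and set $d\coloneqq\alpha_{j+N}$ (so $d\le D$). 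If $d\le\gamma_j$, then $\Startw{d}\ge j+N$ while $\Termw{d}\le j-1$, hence $\Startw{d}-\Termw{d}\ge N+1$, contradicting \Cref{ieq: request} together with $\requestu{d}\le N$. If $d=\gamma_j+1$ (so $2\le d\le D$), then $\Startw{d-1}<j+N$ by minimality of $\alpha_{j+N}$ and $\Termw{d-1}<j$ because $d-1=\gamma_j$, so \Cref{ieq: break before next} gives $\Startw{d}\le\Termw{d-1}+N<j+N$, again contradicting $\Startw{d}\ge j+N$.

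To conclude I would assemble these facts for a fixed worker $n_i$ with representatives $n_i,n_{i+N},n_{i+2N},\dots$. On any day $d$ the set $(\Termw{d},\Startw{d}]$ of working representatives has size $\Startw{d}-\Termw{d}\le\requestu{d}\le N$ by \Cref{ieq: request}, hence contains at most one index congruent to $i$ modulo $N$, so at most one representative of $n_i$ works on day $d$. Therefore the set of days $n_i$ works is the disjoint union of the intervals $[\alpha_{i+tN},\gamma_{i+tN}]$ over all $t\ge0$ with $i+tN\le\Startw{D}$; by the FIFO observation these intervals occur in increasing order of $t$, and by the gap observation any day strictly between $\gamma_{i+tN}$ and $\alpha_{i+(t+1)N}$ is worked by no representative of $n_i$ (not $n_{i+tN}$ or $n_{i+(t+1)N}$, and not any representative of $n_i$ of smaller index, whose interval ends no later than $\gamma_{i+tN}$, nor of larger index, whose interval starts no earlier than $\alpha_{i+(t+1)N}$), i.e.\ is a day off for $n_i$. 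Thus these intervals are exactly the work periods of $n_i$ in chronological order, so the $t$-th work period of $n_i$ is the work period of its $t$-th representative, which is \Cref{itm: period representative correspondence}. I expect the main obstacle to be the gap observation used in \Cref{itm: period representative correspondence}: it is the only place where the positivity-of-off-periods information, namely \Cref{ieq: break before next} and \Cref{ieq: request}, is genuinely needed, whereas the rest is monotonicity bookkeeping.
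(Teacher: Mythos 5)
Your proof is correct and takes essentially the same route as the paper's: \Cref{itm: relevant representatives} and \Cref{itm: each representative one period} via monotonicity and \Cref{ieq: start before termination}, and \Cref{itm: period representative correspondence} by showing consecutive representatives of the same worker are separated by a genuine gap using \Cref{ieq: break before next}. Your write-up is somewhat more explicit than the paper's (the FIFO bookkeeping, the use of \Cref{ieq: request} in one case of the gap argument, and the check that no other representative of the same worker covers the gap day), but these are elaborations of the same argument rather than a different approach.
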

\begin{proof}
    \Cref{itm: relevant representatives} follows from $\Termw{0}=0$ and \Cref{ieq: monotonicity}.
    
    Assume for some $1\leq j\leq\Startw{D}$ that the representative $n_j$ is never assigned to work. Then for all $d\in \CD$, either $j \leq \Termw{d}$ or $j > \Startw{d}$ hold. Let $d'$ be the largest $d$ with $j > \Startw{d}$. This is well defined, since $\Startw{d}$ is non-decreasing in $d$. Because $j \leq \Startw{D}$, it follows that $d' < D$. Then $j \leq \Termw{d'+1}$ must hold, which contradicts \Cref{ieq: start before termination} and consequently gives \Cref{itm: each representative one period}.
    
    Again by \Cref{ieq: monotonicity}, the days a representative is assigned to work are consecutive. Let $n_{j_1}$ and $n_{j_2}$ with $1\leq j_1<j_2\leq\Startw{D}$ be representatives of the same worker. Let $d$ be the earliest day $d$ such that $j_2\leq\Startw{d+1}$. By the choice of $d$, the representative $n_{j_2}$ does not work before day $d+1$. Furthermore:
    $$
    j_1\leq j_2-N\leq\Startw{d+1}-N\overset{\Cref{ieq: break before next}}{\leq}\Termw{d}.
    $$
    Hence, $n_{j_1}$ cannot work on any day later than $d-1$. Therefore the work periods of $n_{j_1}$ and $n_{j_2}$ are not only disjoint, but they are even separated by at least one day off and in canonical order. Thus, each work period of a representative is a distinct work period of the corresponding worker, which implies \Cref{itm: period representative correspondence}.
\end{proof}

Note that the work period associated with representative $n_j$ starts with the first day $d$ that satisfies $j\leq\Startw{d}$ and ends on the last day $d'$ for which $\Termw{d'}<j$.

\begin{proposition}\label{prp: counters to schedule correctness}
    The work schedule computed by \Cref{alg: feasible period counters to schedule} satisfies the work requests and respects local bounds $\lw$, $\uw$, $\lo$, and $\uo$.
\end{proposition}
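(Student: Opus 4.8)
The plan is to check the four requirements one at a time, using the explicit description of \Cref{alg: feasible period counters to schedule}: on day $d$ the algorithm puts exactly the representatives $n_{\Termw{d}+1},\dots,n_{\Startw{d}}$ on duty, so (by \Cref{lem: one period per representative}) the work period of representative $n_j$, $j\le\Startw{D}$, consists of exactly the days $d$ with $\Termw{d}<j\le\Startw{d}$; in particular it begins on the first day $d$ with $j\le\Startw{d}$ and ends on the last day $d$ with $\Termw{d}<j$. For the requests, observe that on day $d$ the algorithm assigns $\Startw{d}-\Termw{d}$ \emph{consecutive} representatives, and by \Cref{ieq: request} this number is at most $\requestu{d}\le N$, so these representatives belong to pairwise distinct workers. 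Hence the number of workers on duty on day $d$ equals $\Startw{d}-\Termw{d}$, which by \Cref{ieq: request} lies in $[\requestl{d},\requestu{d}]$.

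For the bounds $\lw$ and $\uw$, by \Cref{lem: one period per representative} every work period of the schedule is the work period of some representative $n_j$; write $[a,b]$ for the days it covers. To get $b-a+1\ge\lw$: if $a=1$ then $\Termw{\lw}=0<j\le\Startw{1}\le\Startw{\lw}$ by \Cref{ieq: lw boundary} and \Cref{ieq: monotonicity}, so day $\lw$ still lies in the period and $b\ge\lw$; if $a\ge 2$ then $j>\Startw{a-1}\ge\Termw{a-1+\lw}$ by \Cref{ieq: lw} (applicable since \Cref{ieq: lw boundary} forces $a\le D-\lw+1$), so day $a+\lw-1$ still lies in the period and $b\ge a+\lw-1$. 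To get $b-a+1\le\uw$: either $a+\uw>D$, so $b\le D\le a+\uw-1$, or \Cref{ieq: uw} at $d=a$ gives $\Termw{a+\uw}\ge\Startw{a}\ge j$, so day $a+\uw$ is not in the period and $b\le a+\uw-1$.

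For the bounds $\lo$ and $\uo$, the plan is to argue by the symmetry between $\WORK$ and $\OFF$. It is immediate from the description of the output that the number of work periods of the computed schedule starting on a day $\le d$ (resp.\ ending on a day $\le d-1$) equals $\Startw{d}$ (resp.\ $\Termw{d}$) exactly, so by \Cref{lem: off <-> work} its off period counters are $\Starto{d}=\Termw{d}+N-\Startw{1}$ and $\Termo{d}=\Startw{d}-\Startw{1}$. Substituting these identities turns \Cref{ieq: lo boundary}, \Cref{ieq: lo} and \Cref{ieq: uo} into exactly the $\OFF$-versions of \Cref{ieq: lw boundary}, \Cref{ieq: lw} and \Cref{ieq: uw}, i.e.\ $\Termo{\lo}=0$, $\Starto{D-\lo+1}=\Starto{D}$, $\Termo{d+\lo}\le\Starto{d}$ for $d\le D-\lo$, and $\Starto{d}\le\Termo{d+\uo}$ for $d\le D-\uo$. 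Moreover, the assignment rule ``$\Termw{d}<j\le\Startw{d}$'' describes the $\OFF$ days of the cyclically relabelled representatives $j\mapsto j+\Startw{1}$ modulo $N$, so a verbatim mirror of \Cref{lem: one period per representative} shows every off period of the schedule is the off period of one such off-representative, covering exactly the days $d$ with $\Termo{d}<j'\le\Starto{d}$. Re-running the second paragraph with $w$ and $o$ interchanged then yields $\lo\le(\text{length of any off period})\le\uo$.

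I expect the off period case to be the main obstacle, for two reasons. First, one must verify that the algorithm genuinely produces off periods with the representative/FIFO structure (not merely the right off period counters), which is what licenses transporting the work period argument verbatim. Second, the degenerate boundary situations need routine but careful attention, and are exactly where the boundary inequalities \Cref{ieq: lw boundary}, \Cref{ieq: lo boundary} and \Cref{ieq: uo} at $d=1$ come in: a work or off period abutting day $1$ or day $D$, and --- the subtlest case --- a worker who is never on duty, which occurs iff $\Startw{D}<N$; here \Cref{ieq: uo} at $d=1$ together with $\Termw{1}=0$ forces $\uo\ge D$, so that worker's single off period (of length $D$) still respects the upper bound, and a short case analysis handles its lower bound.
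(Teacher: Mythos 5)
Your proof is correct, and for the requests and the bounds $\lw$, $\uw$ it is essentially identical to the paper's: both locate a representative's work period as the days $d$ with $\Termw{d}<j\le\Startw{d}$ and then use \Cref{ieq: lw boundary,ieq: lw,ieq: uw,ieq: monotonicity} to pin down which of the days $a+\lw-1$ and $a+\uw$ lie inside or outside it. The genuine difference is your treatment of $\lo$ and $\uo$. The paper stays entirely in the work-period picture: the off period following representative $n_j$'s work period is exactly the gap before the work period of $n_{j+N}$, so \Cref{ieq: lo} gives $\Startw{d+\lo}\le\Termw{d}+N<j+N$ (the next representative has not yet started by day $d+\lo$) and \Cref{ieq: uo} gives $j+N\le\Startw{d+\uo}$ (it has started by day $d+\uo$), with \Cref{ieq: lo boundary} and \Cref{ieq: uo} at $d=1$ handling periods abutting day $1$ or day $D$. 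You instead set up a full duality: translate the counters via \Cref{lem: off <-> work}, observe that \Cref{ieq: lo boundary,ieq: lo,ieq: uo} become the $\OFF$-analogues of \Cref{ieq: lw boundary,ieq: lw,ieq: uw}, and re-run the work-period argument for relabelled off-representatives. This is sound --- the translated counters do satisfy the analogues of \Cref{ieq: monotonicity,ieq: start before termination,ieq: break before next}, so the mirror of \Cref{lem: one period per representative} goes through --- and it is arguably more systematic, but it costs you an extra lemma (the off-representative structure) that the paper's direct route gets for free, since the gap between consecutive work periods of $n_j$ and $n_{j+N}$ \emph{is} the off period in question. Your explicit handling of the never-on-duty worker via \Cref{ieq: uo} at $d=1$ (forcing $\uo\ge D$ when $\Startw{D}<N$) is a boundary case the paper passes over implicitly; it is worth the sentence you give it.
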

\begin{proof}
    The number of workers that work on day $d$ equals the number of representatives that work on day $d$. By construction, there are $\Startw{d}-\Termw{d}$ representatives on duty, which is a feasible value by \Cref{ieq: request}.
    Next we check the local bounds.
    \begin{itemize}
        \item[$\lw$]
            Due to \Cref{ieq: lw boundary,ieq: monotonicity} no work period terminates before day $\lw$ and no work period starts after day $D-\lw+1$. Thus work periods that start on day $1$ or finish on day $D$ cover at least $\lw$ many days.
        
            Fix $1\leq d\leq D-\lw$ and $j$ such that $n_j$ is a representative that starts working on day $d+1$, and therefore $\Startw{d}<j$ and $j\leq\Startw{d+1}$. This implies 
            $$
            \Termw{d+\lw}\overset{\Cref{ieq: lw}}{\leq}\Startw{d}<j\leq\Startw{d+1}\overset{\Cref{ieq: monotonicity}}{\leq}\Startw{d+\lw}.
            $$
            Thus the representative $n_j$ is on duty on day $d+\lw$ and the corresponding work period covers at least $\lw$ many days.
        \item[$\uw$]
            Given a representative $n_j$, let $d$ be the day $n_j$ starts working. This implies $j\leq\Startw{d}$.
            If $d>D-\uw$, then the work period cannot exceed $\uw$ days anyway.
            If $d\leq D-\uw$, we get
            $$
            j\leq\Startw{d}\overset{\Cref{ieq: uw}}{\leq}\Termw{d+\uw}.
            $$
            Thus the representative $n_j$ does not work on day $d+\uw$ and therefore the corresponding work period lasts at most $\uw$ days.
        \item[$\lo$]
            By \Cref{ieq: lo boundary}, no work period starts on days $2,\dots,\lo$ and no work period terminates on days $D-\lo+1,\dots,D-1$. Hence no off period terminates before day $\lo$ and no off period starts after day $D-\lo+1$. Thus, off periods that start on the first day or finish on day $D$ last for at least $\lo$ days.

            Fix $1\leq d\leq D-\lo$ and $j$ such that $n_j$ is a representative whose last work day is $d$ and thus $\Termw{d}<j$. This implies
            $$
            \Startw{d+\lo}\overset{\Cref{ieq: lo}}{\leq}\Termw{d}+N<j+N.
            $$
            Hence the next representative of the same worker does not yet work on day $d+\lo$. In other words: the off period in between lasts for at least $\lo$ days.
        \item[$\uo$]
            Using \Cref{ieq: uo} for $d=1$ we obtain that 
            \begin{equation*}
                N = \Termw{1} + N \leq \Startw{1+u_o}.
            \end{equation*}
            Hence, at least the first $N$ representatives and therefore all workers start their first work period no later than day $\uo+1$. Therefore, all off periods that start on day $1$ are sufficiently short. Let $n_j$ be a representative that finishes working on day $d-1$ and therefore $j\leq\Termw{d}$.
            If $d>D-\uo$, the following off period cannot exceed $\uo$ days anyway. If $d\leq D-\uo$, we get
            $$
            j+N\leq\Termw{d}+N\overset{\Cref{ieq: uo}}{\leq}\Startw{d+\uo}.
            $$
            Thus the next representative $n_{j+N}$ of the same worker starts working no later than day $d+\uo$ and therefore the off period in between lasts at most $\uo$ days.\qedhere
    \end{itemize}
\end{proof}

The correctness of \Cref{alg: feasible period counters to schedule} implies that the inequalities stated in \Cref{lem: period counter} and \Cref{lem: feasible period counter} are not only necessary, but also sufficient. This is captured by \Cref{cor: iff inequalities}.

\begin{corollary}\label{cor: iff inequalities}
    Period counters are feasible if and only if they satisfy the inequalities in \Cref{lem: period counter} and \Cref{lem: feasible period counter}.
\end{corollary}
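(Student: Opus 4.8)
The plan is to read the two directions off results that are already in hand. The forward (``only if'') direction is immediate: if $\lr{\Startw{d},\Termw{d}}_{d\in\lre{D}}$ are feasible period counters, then by definition they represent a feasible schedule, and \Cref{lem: period counter} together with \Cref{lem: feasible period counter} state precisely that such period counters satisfy every inequality appearing in those two lemmas. So nothing remains to be done there.

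For the converse, suppose $\lr{\Startw{d},\Termw{d}}_{d\in\lre{D}}$ satisfy all inequalities in \Cref{lem: period counter} and \Cref{lem: feasible period counter}. I would run \Cref{alg: feasible period counters to schedule} on them and call the output $f$. First one checks that $f$ really is a schedule $\lre{N}\times\lre{D}\to\lrb{\WORK,\OFF}$: on day $d$ the working representatives are $n_{\Termw{d}+1},\dots,n_{\Startw{d}}$, and by \Cref{ieq: request} there are at most $\requestu{d}\le N$ of them, so their indices are pairwise non-congruent modulo $N$ and no worker is assigned twice. By \Cref{prp: counters to schedule correctness}, $f$ satisfies all requests and respects the local bounds $\lw,\uw,\lo,\uo$; since the \ldodosp fixes $\Uw=\Uo=D$, these are all the constraints, so $f$ is feasible. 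It therefore remains only to confirm that $\lr{\Startw{d},\Termw{d}}_{d\in\lre{D}}$ are the period counters of $f$ in the sense of \Cref{def: period counter}, i.e. that in $f$ exactly $\Startw{d}$ work periods begin within the first $d$ days and exactly $\Termw{d}$ work periods end within the first $d-1$ days.

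This last point is the only real content, and it is a counting argument built on \Cref{lem: one period per representative} and the remark following it, which identifies, for each $1\le j\le\Startw{D}$, the work period of representative $n_j$ as starting on the first day $d$ with $j\le\Startw{d}$ and ending on the last day $d'$ with $\Termw{d'}<j$. By \Cref{itm: each representative one period,itm: period representative correspondence} the work periods of $f$ are in bijection with the representatives $n_1,\dots,n_{\Startw{D}}$, so counting periods reduces to counting representatives. Using \Cref{ieq: monotonicity}, the representatives whose work period begins on some day $\le d$ are exactly those with $j\le\Startw{d}$, giving $\Startw{d}$ of them; and the work period of $n_j$ ends on a day $\le d-1$ exactly when $j\le\Termw{d}$ (if $j>\Termw{d}$, then since $\Termw{\cdot}$ is non-decreasing the last day $d'$ with $\Termw{d'}<j$ is at least $d$), giving $\Termw{d}$ periods ending before day $d$ (note $\Termw{d}\le\Startw{D}$ by \Cref{ieq: monotonicity} and \Cref{ieq: start before termination}, so all these $j$ index genuine periods). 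Hence the given counters represent $f$, and together with feasibility of $f$ this yields the ``if'' direction. I expect the main obstacle to be purely bookkeeping — getting the strict-versus-nonstrict inequalities right and matching the convention that $\Termw{d}$ counts periods ending \emph{strictly before} day $d$ — since all the structural work has already been carried out in \Cref{lem: one period per representative} and \Cref{prp: counters to schedule correctness}.
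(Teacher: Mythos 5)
Your proposal is correct and follows the paper's own route: necessity is read off \Cref{lem: period counter} and \Cref{lem: feasible period counter}, and sufficiency comes from running \Cref{alg: feasible period counters to schedule} and invoking \Cref{prp: counters to schedule correctness}. The paper leaves the final verification implicit — that the given counters actually \emph{represent} the constructed schedule in the sense of \Cref{def: period counter} — and your counting argument via \Cref{lem: one period per representative} supplies exactly that missing bookkeeping correctly.
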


Therefore, checking the feasibility of an \ldodosp instance is equivalent to finding an \emph{integral} solution to the system of inequalities given by \Cref{lem: period counter} and \Cref{lem: feasible period counter}.
Using the same techniques as in \Cref{sec: upper bounds request intervals}, we can solve this integral system of inequalities in polynomial time.

\begin{proposition}\label{prp: solve ieq system}
    The system of inequalities given in \Cref{lem: period counter} and \Cref{lem: feasible period counter} can be solved in $\runtime{D^2}$.
\end{proposition}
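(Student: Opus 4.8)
The plan is to mirror the proof of \Cref{prp: solve udodosp inequalities}: I will show that every inequality occurring in \Cref{lem: period counter} and \Cref{lem: feasible period counter} is a \emph{difference constraint}, i.e.\ of the form $x-y\le c$ for two of the $2D$ unknowns $\Startw{d},\Termw{d}$ and an integer $c$. Consequently, finding an integral solution reduces to computing a feasible potential in an associated digraph, which the Moore-Bellman-Ford algorithm \cite{bellmanRoutingProblem1958,fordNetworkFlowTheory1956,mooreShortestPathMaze1959} does in $\runtime{D^2}$.

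Concretely, I would introduce one vertex $s^d$ for the variable $\Startw{d}$ and one vertex $t^d$ for the variable $\Termw{d}$, so $|V|=2D$. Every monotonicity, positivity and counting inequality (non decr., pos.\ work, pos.\ off, count.\ $\lw$, count.\ $\uw$, count.\ $\lo$, count.\ $\uo$) as well as each of the two inequalities hidden in the two-sided request constraint (count.\ req.) has, after rearranging, the shape $\pi(v)+c(v,w)-\pi(w)\ge 0$ and therefore contributes a single edge whose integer weight lies in $\{0,N,-N,-\requestl{d},\requestu{d}\}$; each equality constraint ($\Termw{1}=0$, $\Termw{\lw}=0$, $\Startw{D-\lw+1}=\Startw{D}$, $\Startw{1}=\Startw{\lo}$, $\Termw{D-\lo+1}=\Termw{D}$) is encoded by a pair of antiparallel zero-weight edges. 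The constraints $\Termw{1}=0$ and $\Termw{\lw}=0$ are the only absolute ones; since all remaining constraints are invariant under adding a common constant to all variables, I would replace them by the relative constraint $\Termw{1}=\Termw{\lw}$, solve, and shift the solution so that $\Termw{1}=0$ at the end (equivalently one may add an explicit reference vertex $r$ with $\pi(r)=0$ and edges $r\leftrightarrow t^1,\ r\leftrightarrow t^{\lw}$ of weight $0$). Since $D$, $N$, the six period bounds and the $2D$ request bounds spawn $\runtime{1}$ or $\runtime{D}$ edges each, the graph has $\runtime{D}$ vertices and $\runtime{D}$ edges.

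Then, exactly as in \Cref{prp: solve udodosp inequalities}, I would add a source $q$ joined to every vertex by a zero-weight edge and run Moore-Bellman-Ford from $q$: in $\runtime{|V|\cdot|E|}=\runtime{D^2}$ time it either reports a negative-weight cycle — certifying that the inequality system, and hence by \Cref{cor: iff inequalities} the \ldodosp instance, is infeasible — or returns shortest-path distances $\pi$, which form a feasible potential and thus a solution of the system. Integrality is automatic because all edge weights are integers, so the computed distances are integers; this yields integral period counters as required by \Cref{cor: iff inequalities}, and feeding them into \Cref{alg: feasible period counters to schedule} produces an actual feasible schedule.

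I do not expect a genuine obstacle beyond bookkeeping: the one point that needs real care is verifying that \emph{every} inequality in the two lemmas is indeed a two-variable difference constraint with integer right-hand side — in particular that (count.\ req.) bounds precisely the difference $\Startw{d}-\Termw{d}$ and nothing with more terms — since this is what guarantees the edge count $\runtime{D}$ and hence the $\runtime{D^2}$ bound; the handful of absolute equalities is dispatched by the translation-invariance argument (or the explicit root). As in the earlier remark, one could further invoke the near-linear-time negative-cycle algorithm of \cite{bernsteinNegativeWeightSSSP2023} to improve the running time, but $\runtime{D^2}$ already suffices for the stated proposition.
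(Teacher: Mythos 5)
Your proposal is correct and follows essentially the same route as the paper: interpret $\Startw{d}$ and $\Termw{d}$ as potentials on $2D$ vertices, turn each inequality into a weighted edge, and run Moore--Bellman--Ford to obtain an integral feasible potential or a negative cycle in $\runtime{D^2}$. Your extra care with the absolute equalities $\Termw{1}=0$ and $\Termw{\lw}=0$ (via translation invariance or a reference vertex) is a detail the paper's proof leaves implicit, but it does not change the argument.
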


Now we can conclude with the proof of our main theorem. 

\thmldodosp*
\begin{proof}
    By \Cref{prp: solve ieq system} we can check in $\runtime{D^2}$ if the system of inequalities given by \Cref{lem: period counter} and \Cref{lem: feasible period counter} is feasible. Due to \Cref{cor: iff inequalities}, this also determines whether the \ldodosp instance is feasible.
\end{proof}

\begin{remark}
    Using \Cref{alg: feasible period counters to schedule} we can translate feasible period counters into a feasible schedule $\CN\times\CD\to\lrb{\text{ON},\text{OFF}}$. Once again, the running time depends on the output format we choose. One can, as before, return the cyclic interval $\left(\Termw{d},\Startw{d}\right]$ for each day $d$ instead of listing all workers that are on duty, which results in a polynomial algorithm.
\end{remark} %

\section{Optimizing the number of workers} \label{sec: optimization}

Until now, we have only considered the decision version of the \dodosp. However, in practice the number of workers might not be given, but instead should be determined by the algorithm.

Since an instance of the \dodosp can be infeasible due to the value of $N$ being either too large or too small, both minimizing and maximizing $N$ appear reasonable. We will only consider minimizing the number of workers, given that both can be achieved using almost the same techniques.
Further, because optimizing $N$ is as least as hard as checking feasibility, we restrict ourselves to the special cases of the \dodosp we can solve in polynomial time: the \udodosp and the \ldodosp.

For both special cases, we can classify feasible instances by checking if a negative cycle (or a feasible potential) in a weighted digraph exists. In the following, we refer to this graph as the \emph{potential graph}.

\begin{lemma}\label{lem: N is an interval}
Given an instance of the \udodosp or the \ldodosp, the set of (fractional) values of $N$ for which the potential graph contains no negative cycle is a possibly empty or one-sided interval.
\end{lemma}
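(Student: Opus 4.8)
The plan is to exhibit the set in question as the projection of a polyhedron onto one coordinate, and then invoke that a polyhedron in $\R^1$ is an interval. The key structural observation is that, in both the \udodosp and the \ldodosp, every edge of the potential graph $G$ (built in \Cref{prp: solve udodosp inequalities} resp.\ \Cref{prp: solve ieq system}) carries a weight that is an \emph{affine function of $N$}: writing $c_e(N)=\alpha_e N+\beta_e$, the coefficients $\alpha_e,\beta_e$ depend only on the fixed instance. Indeed, the only $N$-dependent weights are $N\Uw$, $N(\Uo-D)$, $N\uw$ and $-N$ for the \udodosp graph, and $N$ and $-N$ for the \ldodosp graph, while all request edges have constant weight. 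Recall also that $G$ contains no negative cycle if and only if it admits a feasible potential, i.e.\ a $\pi$ with $\pi(v)+c_e(N)-\pi(w)\ge 0$ for every edge $e=(v,w)$ — the dichotomy already used when running Moore--Bellman--Ford in \Cref{prp: solve udodosp inequalities,prp: solve ieq system}.

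I would then consider the set $Q=\{(\pi,N)\in\R^{V(G)}\times\R : \pi(w)-\pi(v)\le\alpha_e N+\beta_e \text{ for every } e=(v,w)\in E(G)\}$. Since $G$ has finitely many edges and each defining constraint is affine in the joint vector $(\pi,N)$, the set $Q$ is a polyhedron. By the dichotomy above, a value $N$ belongs to the set we care about exactly when some $\pi$ satisfies $(\pi,N)\in Q$; hence that set is the image of $Q$ under the projection $(\pi,N)\mapsto N$. The projection of a polyhedron is again a polyhedron — e.g.\ by Fourier--Motzkin elimination of the $\pi$-coordinates, which moreover yields an explicit inequality description — so it is a polyhedron in $\R$, hence a closed interval (empty, a point, a bounded interval, a half-line, or all of $\R$). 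If one insists on meaningful worker counts, intersecting with $\{N\ge 1\}$ again leaves an interval; in particular the set is a possibly empty or one-sided interval, as claimed.

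An equivalent, more hands-on route avoids polyhedra: $G$ has no negative cycle iff no \emph{simple} cycle has negative weight, and for a simple cycle $C$ the weight $\sum_{e\in C}c_e(N)=\big(\sum_{e\in C}\alpha_e\big)N+\sum_{e\in C}\beta_e$ is affine in $N$, so the condition ``$C$ is nonnegative'' carves out a half-line of $N$ (or all of $\R$, or $\emptyset$); the target set is then the intersection over the finitely many simple cycles, again an interval. I do not anticipate a genuine obstacle — the argument rests on two standard facts (feasible potential $\Leftrightarrow$ no negative cycle, and projections of polyhedra are polyhedra) — and the only point needing care is to treat $N$ as an honest variable of $Q$ rather than as a fixed parameter. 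If one wanted the sharper reading that the interval is genuinely one-sided, one would additionally inspect the signs of the $\alpha_e$ along cycles: a cycle can force a lower bound on $N$ only when its total $N$-coefficient is positive and an upper bound only when it is negative.
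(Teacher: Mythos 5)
Your proof is correct and matches the paper's: the paper's one-line argument is exactly your ``hands-on route,'' namely that each cycle of the potential graph contributes a constraint affine in $N$, so the feasible set of $N$ is a one-dimensional polyhedron, hence an interval. Your primary route via projecting the polyhedron $Q$ of pairs $(\pi,N)$ is an equivalent reformulation and adds nothing essentially new.
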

\begin{proof}
    Since the sum of the weights of a cycle is linear in $N$ and should be non-negative, the set of feasible values for $N$ can be expressed as one-dimensional polyhedron with a linear constraint for each cycle in the potential graph.
\end{proof}

One can minimize $N$ with a binary search. For this, we must first of all find an upper and a lower bound on the optimal value of $N$, if existant. Clearly 0 is a lower bound on $N$, hence it is left to find an upper bound.

\begin{lemma}
    The optimal value for $N$ is bounded from above by $\sum_{d\in\CD}\requestl{d}$. 
\end{lemma}
\begin{proof}
    Fix some feasible schedule for the minimum value of $N$. For each day $d$ mark $\requestl{d}$ many workers that are on duty on day $d$. There are at most $\sum_{d\in\CD}\requestl{d}$ many marked workers. If we remove all workers that are not marked, we still have a feasible schedule. Thus, each worker was marked, and therefore $N\leq \sum_{d\in\CD}\requestl{d}$.
\end{proof}

Furthermore, we must be able to decide if a given value of $N$ is (i) too small (ii) feasible (iii) too large or (iv) the instance is infeasible independent of $N$. Using the Moore-Bellman-Ford algorithm \parencite{bellmanRoutingProblem1958,fordNetworkFlowTheory1956,mooreShortestPathMaze1959} on the potential graph, we can distinguish between cases (i) to (iv): if the algorithm finds a feasible potential, we are in case (ii). If it returns a negative cycle, consider the affine function $a\cdot N+b$ that describes the total cost of this cycle in terms of $N$. Depending on the sign of $a$, we know whether $N$ is too large or too small. If $a=0$, the instance is infeasible for every value of $N$.

\begin{remark}
    Note that the binary search to optimize $N$ is essentially the ellipsoid method on the one-dimensional polyhedron described in \Cref{lem: N is an interval}. Distinguishing cases (i) to (iii) is basically a separation oracle for this polyhedron.
\end{remark} %
\section{Summary} \label{sec: summary}

In this paper, we have considered the \dodosp and examined its complexity. We have shown that if bounds on the requests from both sides are given, then the problem is NP-complete if and only if one global and one local lower bound is nontrivial. Furthermore, there are two special cases in which the \dodosp is polynomial-time solvable, as shown in \Cref{tab: NP-hardness cases if request bounded by one side}. The algorithms we introduced to solve these cases are easy to implement and fast. In case the requests are bounded only from one side, the complexity results only differ slightly.

Our results provide a straightforward method for distinguishing between computationally easy subcases and those that are more complex. This distinction is particularly valuable in real-world applications, where our results may help guide decisions concerning which constraints to relax or drop in order to make the problem computationally more manageable. 
Knowing the specific constraints that drive complexity is not only helpful to determine parameters to relax, but is a valuable information when designing a model in the first place. Even though the solution quality may decrease by choosing a weaker model, this is made up for by the speedup in running time, which is especially important if schedules must be computed quickly.

Even though our algorithms can only solve special cases of the \dodosp they can be of use for the general case, and even other variants of personell rostering. Since the algorithms are very efficient, one can use them as starting point or as a subroutine for more sophisticated algorithms. \medskip 

Future research could explore several interesting directions. One potential generzalization is the introduction of heterogeneous workers, where each worker's work pattern is subject to different constraints, such as varying limits on consecutive work days or total days off. This would make the problem more reflective of real-world scenarios, where employees often have individual preferences.
Another important extension would involve handling predetermined entries in the schedule, such as pre-scheduled holidays or leave requests for certain workers.
This might include continuing an existing schedule to a larger time period.
From the perspective of \Cref{sec: optimization}, one can ask if the \dodosp can be approximated regarding certain parameters such as the minimal number of required workers.
Finally, one could analyze real-world instances hoping to find a more restricted problem definition that covers practical instances, without being NP-hard. \clearpage
\section*{Acknowledgments}
We would like to express our gratitude to Erik Saule for highlighting that the problem is not trivially in NP, and to Maximilian von Aspern for proposing an NP-certificate. 
We also thank Sigrid Knust, Sebastian Lüderssen, Meike Neuwohner, Pieter Smet, Vera Traub, and Greet Vanden Berghe for their valuable comments on various aspects of the content and overall development of the paper.
Finally, we thank Luke Connolly (KU Leuven) for his editorial consultation.
This research was supported by KU Leuven C24E/23/012 `Human-centred decision support based on new theory for personnel rostering'.

\printbibliography
\clearpage
\appendix
\section{Appendix}

\subsection{NP-completeness proof in \texorpdfstring{\parencite{brunnerBoundedFlexibility2013}}{citeBrunner}} \label{sec: wrong proof correction}
In their proof \citeauthor{brunnerBoundedFlexibility2013} aim to reduce the strongly NP-complete \CP to the optimization version of the \dodosp. %
\begin{problem}
    \problemtitle{\CP (CP)}
    \probleminput{A circulant $m\times n$-matrix $A$ and a nonnegative $m$-dimensional vector $b$.\\[0.3cm]}
    \problemquestion{Find an $n$-dimensional nonnegative integer vector $x$ which minimizes $\mathbf{1}\cdot x$ with $Ax \geq b$.}
\end{problem}
Here, an $m\times n$-matrix $A$ is called \emph{circulant} if there exists an $m$-dimensional vector $a$ such that all columns of $A$ are only copies of $a$ whose entries were shifted in a circular way.

However, \citeauthor{brunnerBoundedFlexibility2013} ultimately use the wrong direction of reduction and try to solve certain instances of the \dodosp using the CP. More precisely, they consider instances of the optimization version of the \dodosp with $\uw = \lw$, $\uo = \lo$, $r_u^d = D$ for all days (meaning the requests are only bounded from below) and $D = c\left(u_w + u_o\right)$ for some non-negative integer $c$. Indeed, if one uses the cyclic definition of the optimization version of the \dodosp, which was not originally done by \citeauthor{brunnerBoundedFlexibility2013}, one can reduce such an instance $I$ to an instance $I'$ of CP in the following way. 

Let $a$ be the vector of length $c\lr{u_w+u_o}$ which consists of alternating sequences of $u_w$ ones and $u_o$ zeros. Furthermore, define $A$ to be the circulant matrix which arises from rotating $a$ by onre entry each time. In fact, $A$ will only have $u_w + u_o$ columns, since after this many rotations the vector $a$ is sent to itself. Finally, we set $b_d = r_l^d$. 
The columns in $A$ denote the only possible schedules for a worker in the given instance of the optimization version of the \dodosp. Therefore, any solution to $I'$ naturally provides a solution to $I$ and vice versa. 

However, since this reduction only works for a particular kind of instances of the (circular) optimization version of the \dodosp, we do not get a statement on the complexity of the CP depending on the complexity of the (circular) optimization version of the \dodosp.

\subsection{Missing proofs of \Cref{sec: np completeness}} \label{sec: proofs np completeness}

In this subsection we want to state the proofs for all the NP-completeness results that were only mentioned in \Cref{sec: np completeness}.

So far we only reduced the \threepartition to the \dodosp with exact requests which then implied NP-completeness for the general \dodosp. However, if either one of $\requestl{d}$ and $\requestu{d}$ is trivial, meaning the requests are bounded by only one side, then the problem is not NP-hard in exactly the same cases as the general \dodosp. 
We will determine the computational complexities for all special cases that arise from fixing certain bounds to their trivial values. Most of these can be derived from the general case, but we have to analyze some special cases. An overview of the complexities is again given in \Cref{tab: NP-hardness cases if request bounded by one side detailed}.\smallskip

\begin{table}[htbp]
    \centering
    \renewcommand{\arraystretch}{1.4}
    \begin{tabular}{cll|c|l}
        req. bounds & \multicolumn{2}{c|}{additional bounds} & complexity & \multicolumn{1}{c}{proof}\\\hline
        $\requestu{d},\requestl{d}$&$U_x,l_y$ with $x,y\in \lrb{w,o}$ & or any superset & NP-compl. & \Cref{thm: NP-hard in general}\\
        $\requestu{d},\requestl{d}$&$\Uw,\Uo,\uw,\uo$ & or any subset & P & \Cref{thm: upper bounded dodosp}\\
        $\requestu{d},\requestl{d}$&$\uw,\uo,\lw,\lo$ & or any subset & P & \Cref{thm: local dodosp}\\[15pt]

        $\requestu{d}$&$\Uw,\uw,\lw,\lo$ & or any subset & P & \Cref{thm: complexity one sided}\\
        $\requestu{d}$&$\Uw,\uo,\lw$ & or any superset & NP-compl. & \Cref{lem: NP-hard rud Uw uo lw}\\
        $\requestu{d}$&$\Uo,l_x$ with $x\in \{o,w\}$ & or any superset & NP-compl. & \Cref{lem: NP-hard Uo l}\\
        $\requestu{d}$&$\Uw,\uo,\lo$ & or any superset & NP-compl. & \Cref{lem: NP-complete Uw uo lo}
    \end{tabular}
    \caption{Complexity of the \dodosp. Only the non-trivial request bounds are given. A lower bound on the requests is an upper bound on the number of workers who do not work on a given day. Therefore, we receive the same results for $\requestl{d}$, but with $w$ and $o$ swapped, for the last four rows.}%
    \label{tab: NP-hardness cases if request bounded by one side detailed}
\end{table}

\begin{lemma}\label{lem: NP-hard Uo l}
    The \dodosp is NP-complete, even if only $\requestu{d}, \Uo$ and one of the local lower bounds $\lw$ or $\lo$ is given. 
\end{lemma}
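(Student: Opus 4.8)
The plan is to re-use the \rthreepartition reduction of \Cref{prp: NP-hard}, adapting it so that neither $\requestl{d}$ nor $\Uw$ is used. The conceptual obstacle is that with $\requestl{d}=0$ we can no longer \emph{force} a worker on duty via a daily minimum request; instead we exploit that $\Uo$ forces every worker to be on duty on at least $D-\Uo$ days, and combine this with the per-day cap $\requestu{d}$ through a global counting argument that makes both bounds tight everywhere. Once this double tightness is established, the rest of the argument is a transcription of \Cref{prp: NP-hard}. The two sub-cases (the given local lower bound being $\lw$, resp.\ $\lo$) are handled by the same idea with work periods replaced by off periods.

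I would first treat the case in which $\lw$ is given. Fix a \rthreepartition instance with $3m$ values of average $T$ and build a \dodosp instance with $D=mT+3m$, $N=m$, encoding each $a_i$ by a block of $a_i$ consecutive \emph{value days} followed by one \emph{separator day}, exactly as in \Cref{prp: NP-hard}. Set $\requestu{d}=0$ on separator days, $\requestu{d}=1$ on value days, $\Uo=D-T$, $\lw=\lceil T/4\rceil$, and every other bound to its trivial value (\Cref{tab: trivial bounds}). The separators force all workers off, so the total number of on-duty slots is at most $\sum_i a_i=mT$, while $\Uo$ forces it to be at least $N(D-\Uo)=mT$; hence equality holds everywhere, i.e.\ every value day has exactly one on-duty worker and every worker is on duty on exactly $T$ days. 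This plays the role of the \emph{tight $\Uw$} step in \Cref{prp: NP-hard}, and the remainder carries over: a work period meeting a value block is trapped inside it (the flanking separators are idle for everyone), has length $\ge\lceil T/4\rceil$, and since $a_i<\tfrac12 T\le 2\lceil T/4\rceil$ only one worker can be on duty throughout the block; so each worker gets a triple $A_i$ of values with $\sum_{a\in A_i}a=T$, with $|A_i|=3$ forced by $\tfrac14 T<a_i<\tfrac12 T$. The converse direction is immediate, and all numbers remain polynomial in $m$ since $S\le q(m)$.

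For the case in which $\lo$ is given I would run the complementary construction, encoding the $a_i$ by off periods: keep $D=mT+3m$ and $N=m$, put $\requestu{d}=N$ on separator days and $\requestu{d}=N-1$ on value days, and set $\Uo=T$, $\lo=\lceil T/4\rceil$, all other bounds trivial. Now every value day forces at least one worker off (at least $mT$ off-slots in total) while $\Uo$ caps the total at $N\Uo=mT$, so the separators are fully staffed, each value day has exactly one absent worker, and each worker is off exactly $T$ days. The bound $\lo=\lceil T/4\rceil$ together with $a_i<\tfrac12 T$ forces each value block to be covered by a single worker's off period, and the partition is read off as before. If one wants the used local bound to be genuinely nontrivial one may first scale all $a_i$ by a fixed constant so that $\lceil T/4\rceil>1$.

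I expect the main obstacle to be the counting argument that simultaneously pins \emph{every} occurrence of $\requestu{d}$ and the bound $\Uo$ to equality: one has to verify that the chain of inequalities bounding the total on-duty (resp.\ off-duty) slots from below by the $\Uo$-side and from above by the $\requestu{d}$-side really forces equality on each individual day and for each individual worker, not merely in aggregate. Everything downstream -- the \emph{one worker per value block} claim, the boundary behaviour of periods at days $1$ and $D$ (unproblematic since those days are idle, resp.\ fully staffed, by the counting), and the translation to a \rthreepartition solution -- is then a direct copy of the proof of \Cref{prp: NP-hard}.
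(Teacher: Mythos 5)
Your proposal is correct and follows essentially the same route as the paper: reuse the exact-request \rthreepartition gadgets (value blocks encoded by work periods for the $\lw$ case and by off periods for the $\lo$ case), drop $\requestl{d}$, and observe that $\sum_{d}\lr{N-\requestu{d}}=N\cdot\Uo$ forces both the per-day caps and the global bound $\Uo$ to be tight, so the feasible set is unchanged. The only difference is presentational -- the paper states this as a one-line corollary of the instances from \Cref{thm: NP-hard in general}, whereas you re-derive the construction explicitly -- and the sandwich argument you flag as the ``main obstacle'' is exactly the step the paper's proof rests on.
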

\begin{proof}
    Consider the instances constructed in \Cref{thm: NP-hard in general}. We keep $\requestu{d}=\request{d}$, but set $\requestl{d}=0$ for all $d\in\CD$.
    By construction, we have 
    \begin{equation*}
        \sum_{d\in \CD}N-\requestu{d} = \Uo\cdot N.
    \end{equation*}
    Hence no day $d$ can have fewer than $\requestu{d}$ workers working. Thus, the set of feasible schedules did not change and the reduction from the \rthreepartition works.
\end{proof}

\begin{lemma}\label{lem: NP-hard rud Uw uo lw}
    The \dodosp is NP-complete even if only $\requestu{d}, \Uw, \uo$, and $\lw$ are given. 
\end{lemma}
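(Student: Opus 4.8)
As with the other hardness results I would reduce from the \rthreepartition, reusing the encoding of \Cref{thm: NP-hard in general} (the case where $\Uw$ and $\lw$ are nontrivial) in which the values $a_i$ are represented by work periods: take $N=m$ workers, viewed as bins of capacity $\Uw=T$, and set $\lw=\lceil T/4\rceil$, so that together with $\Uw=T$ every worker has at most three work periods, each of length in $(T/4,T/2)$. By first replacing each $a_i$ with $4a_i+1$ we may assume $T\not\equiv 0\pmod 4$, which also guarantees $\lw\le a_i\le T-2\lw$ for all $i$; all numbers stay polynomial since the restricted variant has $\sum_i a_i$ polynomially bounded. The obstruction to copying the earlier proof is that here $\Uo$ is trivial, so the ``days-off budget'' argument of \Cref{lem: NP-hard Uo l} is unavailable; in fact, since $\requestl{d}=0$, the all-off schedule is formally request-feasible. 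The point is to let $\uo$ play the role of $\Uo$: a worker with at most three work periods has at most four off periods, hence is off on at most $4\uo$ days, and it is also off on at least $D-\Uw=D-T$ days, so setting $\uo\coloneqq (D-T)/4$ (padding the day range by up to three all-off days first, to make $4\mid D-T$) forces every worker to be on duty on exactly $T$ days.

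\textbf{The forced shape and the encoding.} With this choice every feasible schedule has a rigid shape: for each worker all four off periods have length exactly $\uo$, the schedule begins and ends with an off period, and between them lie exactly three work periods $w_1,w_2,w_3$ with $w_1+w_2+w_3=T$ and each $w_j\in(T/4,T/2)$. In particular all workers share the first off period $[1,\uo]$ and the last, and every worker's first work period starts on day $\uo+1$; so the values $a_i$ cannot be encoded in a day-by-day unary demand pattern as in \Cref{prp: NP-hard}, but must be encoded on the three ``layers'' occupied by the workers' first, second and third work periods (each layer-day carrying up to $m$ active workers). Concretely I would lay the $D$ days out as $\uo$ all-off days, a first-work-period layer, an all-off block of length $\uo$, a second-work-period layer, an all-off block, a third-work-period layer, and a final all-off block of length $\uo$; I would then choose the upper request bounds $\requestu{d}$ on the three layers so that a feasible schedule is forced to have, in each layer, a prescribed multiset of work-period lengths (the domination inequalities imposed by $\requestu{d}$ being made tight by counting the number of off-period-start events, which is fixed). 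Choosing those prescribed multisets according to a polynomial-time reformulation of the \rthreepartition instance makes a feasible schedule correspond exactly to a $3$-partition; for the converse one assigns one triple to each worker and reads off the layout above, and all side conditions ($\lfloor T/\lw\rfloor=3$, $\lw\le a_i\le T-2\lw$, $4\mid D-T$, the layer lengths, the per-day requests) can be met simultaneously by the preliminary scaling and padding.

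\textbf{Main obstacle.} The difficulty is that the rigid shape only exposes, per worker, where its off periods start, i.e.\ essentially the prefix sums $w_1$ and $w_1+w_2$ of its triple, while the $\requestu{d}$-values only see per-day head counts and not which start-events come from the same worker. If the construction pinned only the multisets $\{w_1^{(i)}\}$ and $\{w_1^{(i)}+w_2^{(i)}\}$, then feasibility would reduce to finding a bijection between two prescribed multisets with matched differences in a fixed interval -- a bipartite-matching problem, hence polynomial-time -- so the reduction would fail. Making the three layers pin down \emph{all three} lengths $w_1,w_2,w_3$ of every worker's triple (so that the residual problem is a genuine numerical three-dimensional matching), while still being realizable from an arbitrary $3$-partition, is the heart of the argument; I expect this to require an extra, fixed-length work period per worker and a bespoke arrangement of the layers so that their occupancy profiles measure three independent coordinates rather than a chain of prefix sums. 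Getting that, and getting all the arithmetic to line up at once, is where the real work lies.
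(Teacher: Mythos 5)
Your proposal does not reach a proof: you yourself flag the decisive step as unresolved (``I expect this to require an extra, fixed-length work period per worker\dots{} is where the real work lies''), and the difficulty you identify is real. In your rigid-shape setup every worker's schedule is determined by the two breakpoints $w_1$ and $w_1+w_2$, and the only leverage $\requestu{d}$ gives you is a per-day head count; as you note, pinning down the multisets of $w_1$-values and of $(w_1+w_2)$-values separately leaves a bipartite matching problem that is solvable in polynomial time, so no choice of $\requestu{d}$ within that framework can encode \rthreepartition. Worse, the layered layout you sketch is internally inconsistent with the rigid shape: if the inter-layer all-off blocks are forced globally (by $\requestu{d}=0$ on $\uo$ consecutive days), then \emph{every} worker's off periods are pinned to those same positions, all triples $(w_1,w_2,w_3)$ coincide, and the instance carries no information about the $a_i$; if they are not forced globally, you are back to the head-count obstruction. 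So the construction as described either collapses or provably cannot work, and the missing idea is precisely the content of the lemma.

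The paper takes a different and local route that sidesteps this entirely: each $a_i$ gets its own gadget in the request sequence ($\uo$ zeros, $\lw$ days of request $N$, $\lw$ days of request $1$, $\lw$ zeros, $\lw$ days of request $N-1$, then $a_i$ ones, then the mirror image), with $\lw=\tfrac{T}{4}$ and $\uo=\tfrac{T}{2}$. The leading zeros plus $\uo$ force every worker to start working on the first day of the request-$N$ block, $\lw$ forces whoever touches the request-$1$ block to fill it entirely (so exactly one ``special'' worker does, the other $N-1$ filling the request-$(N-1)$ block), and $a_i<2\lw$ forces the special workers identified from the left and right halves to coincide, so exactly one worker absorbs $a_i$ extra work days while all others work exactly $4\lw$ days in the gadget. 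Concatenating the gadgets and setting $\Uw=3m\cdot 4\lw+T$ then makes feasibility equivalent to the $3$-partition, with no need for a single global rigid shape or for reading off three independent coordinates from head counts. If you want to salvage your approach you would have to invent an analogous per-value gadget; the global ``three layers'' picture cannot be made to work with per-day upper request bounds alone.
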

\begin{proof}
    We follow the proof of \Cref{prp: NP-hard}, but use a slightly more complex encoding of the values $a_i$. Without loss of generality let $T$ be divisible by $4$. We set $\lw=\frac{T}{4}$ and $\uo=\frac{T}{2}$. This , in particular, implies $\lw<a_i<2\lw=\uo$ for all $i$.
    
    The encoding of $a_i$ consists of the following sequence: $\uo$ times zero, $\lw$ times $N$, $\lw$ times one, $\lw$ times zero, $\lw$ times $\lr{N-1}$, $a_i$ times one; after that we repeat the sequence in reversed order by adding $\lw$ times $\lr{N-1}$, $\lw$ times zero, $\lw$ times one, $\lw$ times $N$ and finally $\uo$ times zero (see \Cref{fig: NP-hard special case sequence}).

    \begin{figure}[htbp]
        \centering
        \includestandalone{np\_hardness\_sequence\_special\_case}
        \caption{Sequence of values $\requestu{}$ to encode $a_i$ (up to a mirrored part to the right). The rectangles represent the first to fourth block of that sequence.}
        \label{fig: NP-hard special case sequence}
    \end{figure}

    We claim the following: 
    \begin{claim}
        In any feasible schedule for such a sequence, there is one \enquote{special} worker who works on $4\lw + a_i$ many days and all other workers work on $4\lw$ many days. There always exists a schedule that fulfills the local bounds.
    \end{claim}

    We can put the encodings of the values $a_i$ one after another and merge trailing and leading zeros of consecutive encodings. This way, the encoding sequences are still independent and do not restrict each other. According to the claim, we define $U_w = 3m\cdot4\lw + T$ and we use the same arguments as in the general proof. 

    It remains to prove the claim.
    \begin{claimproof}
        We call the subsequence from position $\uo+1$ to position $\uo + \lw$ the first block of that sequence. This subsequence contains exactly the $\lw$ positions with request $N$ (in the left half of the sequence). Similarly we denote by the second block the following $\lw$ positions with request one, by the third block the following $\lw$ positions with request zero and by the fourth block the $\lw$ positions with request $N-1$ (cf. \Cref{fig: NP-hard special case sequence}). The sequence of $a_i$ ones will be abbreviated to $a_i$.
        
        No worker works in the third block.
        Due to the leading zeros and $\uo$, every worker is on duty on the first day of the first block. Therefore, by $\lw$ every worker is on duty for the entire first block.
        Whichever worker works on any day of the second resp. the fourth block has to work on the first day of the second resp. the last day of the fourth block ($\lw$).
        Hence at most one worker works in the second block and at most $N-1$ workers work in the fourth block.
        Again considering $\uo$, we conclude that exactly one worker $n_1^*$ is on duty for the entire second block while all other $N-1$ workers are on duty for the entire fourth block. Additionally, $n_1^*$ has to work the first $\lw$ days of $a_i$.
        The same holds for the four blocks after $a_i$. Thus there is also a special worker $n_2^*$ who works in the second to last block (the block with $\requestu{}=1$) and works the last $\lw$ days of $a_i$. Since $a_i<2\lw$, both $n_1^*$ and $n_2^*$ are on duty in the middle of $a_i$, hence they are the same worker.
        Thus we have shown that every feasible schedule has one special worker who works for $4\lw+a_i$ days (first and second block, $a_i$, second to last and last block)
        while all other workers work for $4\lw$ days (first, fourth, fourth to last and last block).
        Finally, note that this indeed gives a feasible schedule.
    \end{claimproof}
    As already mentioned this ends the proof of the lemma.
\end{proof}

\begin{lemma}\label{lem: NP-complete Uw uo lo}
    The \dodosp is NP-complete, even if only $\requestu{d}, \Uw$, $u_o$ and $l_o$ are given.
\end{lemma}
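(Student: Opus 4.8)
The plan is to reduce from the \rthreepartition, exactly in the spirit of \Cref{prp: NP-hard} and \Cref{lem: NP-hard rud Uw uo lw}: the $N=m$ workers play the role of $m$ containers of capacity $\Uw$, each value $a_i$ is encoded by a contiguous gadget of $\requestu{}$-values, and we choose $\Uw$ so that a feasible schedule exists iff the $a_i$ can be split into $m$ triples of sum $T$. Since the only nontrivial bounds available here are $\requestu{d}$, $\Uw$, $\uo$ and $\lo$, the value $a_i$ must again be represented by \emph{work}: the only per-worker upper bound is $\Uw$ (both $\uw=D$ and the total number of days off is unconstrained because $\Uo=D$), so $\Uw$ has to act as the container capacity. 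Concretely, the gadget should contain an inner block of request $1$ and length $a_i$ that, in any feasible schedule, is worked in full by a single \enquote{special} worker, while every worker — special or not — is forced to do the same amount $C$ of \enquote{background} work inside the gadget. Setting $\Uw=3mC+T$ then makes the work bound tight for every worker simultaneously, and the partition/triple argument of \Cref{lem: NP-hard rud Uw uo lw} (using $\tfrac14 T<a_i<\tfrac12 T$) closes the equivalence; as always the instance is polynomially bounded because the source problem is strongly NP-complete, so strong NP-completeness follows.

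The genuinely new ingredient is the encoding, because the rigidity argument of \Cref{lem: NP-hard rud Uw uo lw} rests on the interplay between the off-period upper bound $\uo$ and the work-period lower bound $\lw$, and here $\lw$ (and $\uw$) are trivial. I would replace the work-period bound by two devices. First, $\lo$: after scaling all $a_i$ by $4$ we may assume $4\mid T$ and put $\lo=T/4$ and $\uo=T/2$, so that $\lo<a_i<2\lo=\uo$ for every $i$; then the inner block cannot be shared by two workers, since their contributions would be separated by an off period of length at least $\lo$ on each side, forcing $a_i\ge 2\lo$ — and $\ge 3$ workers would force $a_i\ge 3\lo$ — which is impossible. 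Second, days with request $0$ force everyone off and days with request $N$ force everyone to work \emph{irrespective of any period bound}, so blocks of such days can serve as the hard synchronization \enquote{walls} that in \Cref{lem: NP-hard rud Uw uo lw} were produced by the $\uo$-bound; $\uo$ itself is then used to cap all off periods and, together with carefully placed request-$N$ walls, to force every inner block to actually be covered (rather than left empty) and to pin the non-special workers to their background pattern.

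The forward direction (a partition yields a feasible schedule) is routine once the gadget is fixed: assign $A_i$ to $n_i$, let $n_i$ be special in the three gadgets corresponding to the elements of $A_i$, and verify the four bounds directly. The converse is the heart of the proof and amounts to a claim analogous to the one in \Cref{lem: NP-hard rud Uw uo lw}: in \emph{every} feasible schedule each gadget has exactly one special worker, who works $C+a_i$ days inside it, and every other worker works exactly $C$ days inside it. Granting this, the total amount of work is forced to $N(3mC+T)=N\Uw$, so $\Uw$ is tight everywhere, the special-value sums per worker are exactly $T$, and $\tfrac14 T<a_i<\tfrac12 T$ forces triples, as in \Cref{prp: NP-hard}.

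The main obstacle is the design and verification of this gadget. The subtle point — the one that makes a naive dualization of \Cref{lem: NP-hard rud Uw uo lw} fail — is forcing that \emph{exactly} one worker, neither zero nor two, contributes to each inner block: a request of $N-1$ only forbids \emph{zero} workers being off (it does not forbid two), and with $\lw$ trivial a worker can in principle touch the inner block via a very short work period. Recovering rigidity therefore requires a longer gadget, with extra request-$0$/request-$N$ walls around the inner block arranged so that (a) no worker can slip through a wall with a short off period — blocked by $\lo$ — , (b) no worker can sit out an entire gadget — blocked by $\uo$ together with the walls — , and (c) the one worker that is driven onto the inner block has no choice but to cover all of it. Producing a layout for which these three conditions hold simultaneously, and for which the background count $C$ is genuinely worker-independent, is where the real work of the proof lies.
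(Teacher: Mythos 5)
Your high-level plan --- reduce from the \rthreepartition, encode each $a_i$ as a request-$1$ block that must be worked entirely by one \enquote{special} worker, equalize everyone's background work to a constant $C$, and set $\Uw=3mC+T$ so the bound is tight for all workers --- is exactly the strategy of the paper's proof. But the proposal stops where the proof begins: the gadget is never constructed, and you say yourself that producing a layout satisfying your conditions (a)--(c) is \enquote{where the real work of the proof lies}. Moreover, the two concrete mechanisms you do offer are both unsound. First, the claim that two workers sharing the inner block forces $a_i\ge 2\lo$ via off periods \enquote{on each side} fails: since $\lw$ is trivial and $\requestl{d}=0$, worker $A$ can cover a prefix of the block and worker $B$ a suffix with no off period lying \emph{inside} the block at all --- $A$'s subsequent off period and $B$'s preceding off period both extend outside the block, so no lower bound on $a_i$ follows (and likewise for three workers). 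Second, a day with $\requestu{d}=N$ does \emph{not} \enquote{force everyone to work irrespective of any period bound}: in this lemma only $\requestu{}$ is present, so such a constraint is vacuous; work can only be forced indirectly, by $\uo$ acting after a stretch of request-$0$ days. The \enquote{walls} you rely on therefore do not exist as described.

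The paper closes exactly these gaps, with different parameters: $\lo=T$ and $\uo=\tfrac32 T+1$, much larger relative to $a_i$ than your $\lo=T/4$. All workers are driven onto a single request-$N$ day by the preceding zeros together with $\uo$; then between the first two request-$1$ days $d_1$ and $d_2$ (where $d_2$ starts the $a_i$-block) lie $\uo$ days of total capacity $N-1$, so a counting argument pins one special worker to $d_1$ and $d_2$ and the other $N-1$ workers to a single request-$(N-1)$ day. The decisive step is that the distance from that $(N-1)$-day to the \emph{end} of the $a_i$-block is $\tfrac12 T+a_i<T=\lo$, so any non-special worker touching the block would create an off period shorter than $\lo$; and $a_i<\tfrac12 T<\lo$ forbids the special worker from pausing inside the block. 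In other words, exclusivity comes from forced work \emph{near} the block combined with a very large $\lo$, not from off periods \emph{within} the block. Without a concrete gadget and a rigidity argument of this kind, the proposal remains an outline rather than a proof.
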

\begin{proof}
    We again follow the proof of \Cref{thm: NP-hard in general} and assume w.l.o.g.\;$T$ to be even, otherwise we can multiply all values $a_i$ and thus $T$ with two. This time we set $\lo=T$ and $\uo=\frac{3}{2}T+1$. The encoding of $a_i$ is given by \Cref{fig: NP-hard special case 2 sequence}.
    \begin{figure}[htbp]
        \centering
        \includestandalone{np\_hardness\_sequence\_2}
        \caption{Sequence of values $\requestu{}$ to encode $a_i$ (up to a mirrored part to the right)}
        \label{fig: NP-hard special case 2 sequence}
    \end{figure}
    It suffices to prove the following:
    \begin{claim}
        In any feasible schedule for such a sequence, there is one \enquote{special} worker who works on $4 + a_i$ many days and all other workers work on $4$ days.  There always exists a schedule that fulfills the local bounds.
    \end{claim}
    \begin{claimproof}
        Due to the leading zeros, everybody works on the day where $\requestu{}=N$.
        Let $d_1$ and $d_2$ be the first two days where $\requestu{}=1$ (meaning $d_2$ is the first day of $a_i$).
        There are $\uo$ days strictly in between $d_1$ and $d_2$ whose $\requestu{}$ values sum up to $N-1$. Thus there has to be a special worker $n_1^*$ who works on $d_1$. Furthermore, all other workers have to work on the day where $\requestu{}=N-1$, and $n_1^*$ is on duty on $d_2$. Since $\frac{1}{2}T+a_i<T=\lo$, none of the other workers can work on any of the $a_i$-days.

        The same situation holds for the mirrored sequence to the right of $a_i$ with a special worker $n_2^*$. Since none of the other workers may work at the end of $a_i$, we again know that $n_1^*=n_2^*$. Thus there is one special worker who works on the first and the last day of $a_i$. Since $a_i<\frac{1}{2}T<\lo$, this worker cannot have a break in between these days. In other words $n_1^*=n_2^*$ is on duty on all $a_i$ days.
        Furthermore, every worker works exactly two days before and two days after $a_i$. The described schedule is indeed feasible and thus the claim holds.
    \end{claimproof}
    Finally we set $\Uw=3m\cdot4+T$ and the statement follows by concatenating the encodings of the $a_i$ and merging trailing and leading zeros.
\end{proof}

\begin{theorem}\label{thm: complexity one sided}
    If the requests are bounded only by one side, the theoretical complexity of the \dodosp is as given in \Cref{tab: NP-hardness cases if request bounded by one side}. 
\end{theorem}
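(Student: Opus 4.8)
The plan is to reduce the whole classification to the single case ``only $\requestu{}$ is nontrivial'' and then dispatch each pattern of active bounds either to one of the hardness lemmas \Cref{lem: NP-hard Uo l,lem: NP-hard rud Uw uo lw,lem: NP-complete Uw uo lo} or to one of the polynomial-time results for the \udodosp and the \ldodosp, or to a one-line observation. First I would record the duality obtained by exchanging the roles of $\WORK$ and $\OFF$: this turns the constraint ``at least $\requestl{d}$ workers on duty on day $d$'' into ``at most $N-\requestl{d}$ workers on duty on day $d$'' and simultaneously swaps $\lw\leftrightarrow\lo$, $\uw\leftrightarrow\uo$, and $\Uw\leftrightarrow\Uo$. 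Hence the case where only $\requestl{}$ is nontrivial, with bounds $(\lw,\uw,\lo,\uo,\Uw,\Uo)$, is equivalent to the case where only $\requestu{}$ is nontrivial, with bounds $(\lo,\uo,\lw,\uw,\Uo,\Uw)$; it therefore suffices to treat the latter.

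Next I would isolate the polynomial-time patterns among ``only $\requestu{}$'' instances. If both local lower bounds are trivial ($\lw=\lo=1$), the instance is a \udodosp instance, solved in polynomial time in \Cref{sec: upper bounds}. If both global bounds are trivial ($\Uw=\Uo=D$), it is a \ldodosp instance, solved in \Cref{sec: local bounds}. If instead $\uo=\Uo=D$, the all-off schedule is feasible (its single off period has length $D\le\uo=\Uo=D$, and every other requirement is vacuous because $\requestl{d}=0$; we may assume $\lo\le D$, as otherwise every worker is forced to work on every day and feasibility is immediate to check), so the instance is a trivial yes-instance. Since fixing more bounds to their trivial values only shrinks the instance set, each of these three situations, and every sub-case of it, is polynomial-time solvable.

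Then I would handle the remaining patterns, namely those in which at least one local lower bound, at least one global bound, and at least one of $\uo,\Uo$ are nontrivial. If $\Uo$ is nontrivial, then $\lw$ or $\lo$ is nontrivial as well, and \Cref{lem: NP-hard Uo l} gives NP-hardness. Otherwise $\Uo$ is trivial, forcing $\Uw$ to be nontrivial, while $\uo$ is nontrivial; now \Cref{lem: NP-hard rud Uw uo lw} applies if $\lw$ is nontrivial and \Cref{lem: NP-complete Uw uo lo} applies if $\lo$ is nontrivial. Since adding further nontrivial bounds only enlarges the instance set, NP-hardness propagates to every larger pattern, and, together with membership in NP (already established), this gives NP-completeness throughout. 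As the three polynomial situations and the three hardness lemmas cover complementary patterns of bounds, this determines the table in the case where only $\requestu{}$ is present, and the case where only $\requestl{}$ is present follows by the duality of the first step.

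The only real work is the bookkeeping: one must check that the stated polynomial cases and the three hardness lemmas genuinely partition all active/trivial patterns of the six bounds (together with the choice of which request side is present). The points to watch are that $\uw$ never influences the complexity when only $\requestu{}$ is present -- on the easy side it is absorbed into the \udodosp and on the hard side into a larger pattern, and when only $\requestl{}$ is present it is the mirror image of $\uo$ -- and that \Cref{lem: NP-hard rud Uw uo lw} and \Cref{lem: NP-complete Uw uo lo} genuinely need $\uo$ to be nontrivial, which is exactly why the otherwise borderline pattern ``$\Uw$ nontrivial, $\uo=\Uo=D$'' is caught by the all-off argument rather than being left unclassified.
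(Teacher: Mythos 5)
Your proposal is correct and follows essentially the same route as the paper: the work/off duality reducing the $\requestl{}$-only case to the $\requestu{}$-only case, the \udodosp and \ldodosp algorithms plus the all-off schedule for the polynomial rows, and \Cref{lem: NP-hard Uo l,lem: NP-hard rud Uw uo lw,lem: NP-complete Uw uo lo} for the hard rows. Your explicit verification that the three polynomial patterns and the three hardness patterns are complementary, and that the all-off argument requires $\uo$ and $\Uo$ to be trivial, is bookkeeping the paper leaves largely implicit, but the underlying argument is the same.
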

\begin{proof}
    Note that a lower bound on the requests is an upper bound on the number of workers who do not work on a given day. Therefore, if only $\requestl{d}$ is given, we receive the same results as in the case that only $\requestu{d}$ is given, but with $w$ and $o$ swapped. 

    In \Cref{sec: upper bounds,sec: local bounds} we present polynomial algorithms for the cases that only upper bounds or only local bounds are given. Since $\requestl{d}$ can be chosen arbitrary, they also apply to this case. 
    If only $\Uw, \uo,\lo$ and $\lw$ are given, the schedule in which no worker works at all is feasible. Thus, this case is trivially in P. 
    The remaining cases are considered in \Cref{lem: NP-hard rud Uw uo lw,lem: NP-hard Uo l,lem: NP-complete Uw uo lo}.
\end{proof}

Last, we remark that we can also conclude the complexity of the \cdodosp.
\begin{corollary}\label{cor: NP-hard cyclic}
    The \cdodosp is NP-complete, even if $\requestl{d} = \requestu{d}$ for all $d\in \CD$ and if there is only one nontrivial global bound and one nontrivial local bound. 
\end{corollary}
\begin{proof}
    In the proof of \Cref{thm: NP-hard in general}, the final day of the scheduling period has request 0 or $N$ depending on whether $\lw$ or $\lo$ is nontrivial. Thus, in the cyclic version the two periods representing $a_1$ and $a_{3m}$ are also separated. Therefore, we can apply the same proof.
\end{proof}

\subsection{Compuatation of feasible potential}

In this subsection we want to describe the network that is used to compute a feasible potential for the \udodosp and the \ldodosp in more detail.

\begin{proposition}\label{prp: solve udodosp inequalities}
    The integral inequality system described in \Cref{cor: udodosp feasibility} can be solved in $\runtime{D^2}$. 
\end{proposition}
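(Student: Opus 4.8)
The plan is to observe that the system in \Cref{cor: udodosp feasibility} is a system of \emph{difference constraints} and can therefore be solved by a single negative-cycle / shortest-path computation, which is exactly the reduction announced in the introduction.

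First I would put every inequality into the canonical shape $\Work{i} - \Work{j} \le c_{ij}$ with an integer right-hand side. The bound with $\Uw$ stays $\Work{D} - \Work{0} \le N\Uw$; the bound with $\Uo$ becomes $\Work{0} - \Work{D} \le N(\Uo - D)$; each $\uw$-inequality is already of this form, $\Work{d+\uw} - \Work{d-1} \le N\uw$; each $\uo$-inequality becomes $\Work{d-1} - \Work{d+\uo} \le -N$; and each request-interval constraint splits into $\Work{d} - \Work{d-1} \le \requestu{d}$ and $\Work{d-1} - \Work{d} \le -\requestl{d}$. Two things matter here: all right-hand sides are integers, being sums and products of the integer input parameters; and the total number of constraints is $\runtime{D}$ (two global ones, at most $D - \uw$ and at most $D - \uo$ local ones, and $2D$ request ones).

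Next I would build the constraint graph $H$ with vertex set $\{0,1,\dots,D\}$ -- one vertex per unknown $\Work{d}$ -- plus an auxiliary source $s$; for each constraint $\Work{i} - \Work{j} \le c_{ij}$ add an arc $j \to i$ of weight $c_{ij}$, and add a zero-weight arc $s \to v$ for every $v \ne s$. Then $H$ has $\runtime{D}$ vertices and $\runtime{D}$ arcs, and since $s$ reaches every vertex while having no incoming arc, every cycle of $H$ is reachable from $s$. By the textbook equivalence between difference-constraint systems and shortest paths, the system is feasible if and only if $H$ has no negative-weight cycle, and in that case the assignment $\Work{d} \coloneqq \mathrm{dist}_H(s,d)$ satisfies all constraints. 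Running the Bellman--Ford--Moore algorithm \cite{bellmanRoutingProblem1958,fordNetworkFlowTheory1956,mooreShortestPathMaze1959} from $s$ simultaneously detects a negative cycle (declare infeasibility) and, failing that, returns the distances in $\runtime{|V|\cdot|E|} = \runtime{D^2}$ time; as all arc weights are integers the distances are integers, which is exactly what \Cref{cor: udodosp feasibility} requires. Finally, subtracting $\Work{0}$ from all values normalizes to $\Work{0} = 0$, and the per-day quantities are read off as $\work{d} = \Work{d} - \Work{d-1}$.

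There is no genuinely hard step; the care is all in the bookkeeping -- checking that each family of constraints contributes only $\runtime{D}$ arcs, so that the graph is sparse and Bellman--Ford costs $\runtime{D^2}$ rather than more, and noting that integrality of the output is inherited from integrality of the arc weights. (If one wanted a faster bound one could substitute the near-linear negative-weight shortest-path algorithm of \textcite{bernsteinNegativeWeightSSSP2023}, but $\runtime{D^2}$ already matches \Cref{thm: upper bounded dodosp}.)
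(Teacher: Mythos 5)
Your proof is correct and follows essentially the same route as the paper: both interpret the system as difference constraints whose solutions are exactly the feasible potentials of a sparse digraph on the $D+1$ unknowns, and both invoke Moore--Bellman--Ford to either return integral potentials or certify infeasibility via a negative cycle in $\runtime{D^2}$. The only cosmetic difference is your explicit auxiliary source $s$; the paper works directly with the feasible-potential formulation, which amounts to the same computation.
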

\begin{proof}
   Consider the graph $G=\lr{V,E,c}$ with
    \begin{align*}
        V=&\lrb{v^d\,:\,0\leq d\leq D},\\
        E=&\lrb{e_{\Uw}=\lr{v^0,v^D},e_{\Uo}=\lr{v^D,v^0}}\\
            &\cup\lrb{e_{\uw}^d=\lr{v^{d-1},v^{d+\uw}}\,:\,1\leq d\leq D-\uw}\\
            &\cup\lrb{e_{\uo}^d=\lr{v^{d+\uo},v^{d-1}}\,:\,1\leq d\leq D-\uo}\\
            &\cup \lrb{e_{\requestl{}}^d=\lr{v^d,v^{d-1}},
                       e_{\requestu{}}^d=\lr{v^{d-1},v^d}\,:\,1\leq d\leq D}
    \end{align*}
    \vspace{-1em}
    and
    \vspace{-1em}
    \begin{align*}
        c\lr{e_{\Uw}} &= N\cdot\Uw,& c\lr{e_{\Uo}}&=N\cdot\lr{\Uo-D},\\
        c\lr{e_{\uw}^d}&=N\cdot\uw,& c\lr{e_{\uo}^d}&=-N,\\
        c\lr{e_{\requestl{}}^d}&=-\requestl{d},& c\lr{e_{\requestu{}}^d}&=\requestu{d}.
    \end{align*}
    An example of such a graph is given in \Cref{fig: upper bounds graph}. 
    \begin{figure}[htbp]
        \centering
        \includestandalone[width=0.7\textwidth]{upper_bounds_graph}
        \caption{The graph which results from requests $\lrb{1,3},\lrb{1,1},\lrb{1,4},\lrb{2,3},\lrb{4,4},\lrb{1,3},\lrb{2,4},\lrb{2,2},\lrb{1,2}$ and bounds $u_w = 4, u_o = 2, U_w = 6$ and $U_o = 4$ is shown in black. For $N = 4$ one can check that the blue values in the vertices form a feasible potential and thus imply feasibility of the \dodosp instance. This potential corresponds to the demands given in \Cref{fig: representatives algo}.}
        \label{fig: upper bounds graph}
    \end{figure}

    Recall that a function $\pi$ on the vertices of a digraph is called a feasible potential, if $\pi(v) + c(v,w) - \pi(w) \geq 0$ for each edge $\lr{v,w}$.
    The values $\Work{d}$ fulfill the inequality system given in \Cref{cor: udodosp feasibility} if and only if $\pi\lr{v^d}\coloneqq\Work{d}$ for $0\leq d\leq D$ is a feasible potential of $G$. Each edge of the graph corresponds to one inequality.
    Since the number of vertices and the number of edges in $G$ is linear in $D$, the Moore-Bellman-Ford algorithm \parencite{bellmanRoutingProblem1958,fordNetworkFlowTheory1956,mooreShortestPathMaze1959} can compute a feasible integral potential or detect a negative cycle in $\runtime{D^2}$.
\end{proof}

\end{document}